\documentclass[11pt]{article}
\usepackage[top=1in,bottom=1in,left=1in,right=1in]{geometry}
\usepackage{amsmath,amssymb,amsthm}
\usepackage{xcolor}
\usepackage{algorithm}
\usepackage{algpseudocode}
\usepackage{hyperref}
\hypersetup{
  colorlinks=true,
  linkcolor=blue,
  citecolor=blue,
  urlcolor=blue,
  pdftitle={A Public-Key-Dependent Adversarial-Deletion Ceiling for Fixed-Alphabet Multi-Bit Pseudorandom Codes},
  pdfauthor={Frederick Dehmel and Venkat Guruswami and Shilun Li}
}

\theoremstyle{plain}
\newtheorem{theorem}{Theorem}
\newtheorem{lemma}[theorem]{Lemma}
\newtheorem{proposition}[theorem]{Proposition}
\newtheorem{corollary}[theorem]{Corollary}
\theoremstyle{definition}
\newtheorem{definition}[theorem]{Definition}

\newcommand{\Enc}{\mathrm{Encode}}
\newcommand{\Dec}{\mathrm{Decode}}
\newcommand{\KGen}{\mathrm{KeyGen}}
\newcommand{\pk}{\mathrm{pk}}
\newcommand{\sk}{\mathrm{sk}}
\newcommand{\LCS}{\mathrm{LCS}}
\newcommand{\bits}{\{0,1\}}
\newcommand{\Sigmaq}{\Sigma_q}
\newcommand{\Cpk}{\mathcal{C}^{\pk}_{\delta}}
\newcommand{\subseq}{\sqsubseteq}
\newcommand{\negl}{\mathrm{negl}}
\newcommand{\E}{\mathbb{E}}
\DeclareMathOperator{\Bin}{Bin}

\newcommand{\gm}[1]{\gamma_2^{(#1)}}

\DeclareMathOperator{\Geo}{Geom}
 \DeclareMathOperator{\poly}{poly}

\algrenewcommand\algorithmicrequire{\textbf{Input:}}
\algrenewcommand\algorithmicensure{\textbf{Output:}}
\algnewcommand{\LineComment}[1]{\State \(\triangleright\) \textit{#1}}

\definecolor{MyDarkBlue}{rgb}{0,0.08,1}
\definecolor{MyDarkGreen}{rgb}{0.02,0.6,0.02}
\definecolor{MyDarkRed}{rgb}{0.8,0.02,0.02}
\definecolor{MyDarkOrange}{rgb}{0.40,0.2,0.02}
\definecolor{MyPurple}{RGB}{111,0,255}
\definecolor{MyRed}{rgb}{1.0,0.0,0.0}
\definecolor{MyGold}{rgb}{0.75,0.6,0.12}
\definecolor{MyDarkgray}{rgb}{0.66, 0.66, 0.66}

\title{A Public-Key-Dependent Adversarial-Deletion Ceiling for\\ Fixed-Alphabet Multi-Bit Pseudorandom Codes}
\author{
Frederick Dehmel\thanks{Department of Electrical Engineering and Computer Sciences, UC Berkeley, Berkeley, CA 94720, USA. Email: \url{dehmelf@berkeley.edu}.} 
\and
Shilun Li\thanks{Department of Mathematics, UC Berkeley, Berkeley, CA, 94720, USA. Supported in part by DOD Advanced Research Projects Agency grant HR0011262E031. Email: \url{shilun@berkeley.edu}.}
}
\date{}

\begin{document}
\maketitle

\begin{abstract}
A pseudorandom code (PRC) is a keyed error-correcting code whose codewords are computationally indistinguishable from uniform strings. We study public-key PRCs over fixed alphabets against adversarial deletions, where the deletion channel may both depend on the public encoding key and the transmitted codeword.

Let \(\gamma_q^{\mathrm{LCS}}\) denote the asymptotic normalised longest-common-subsequence length of two independent uniform \(q\)-ary strings. We prove that for every fixed \(q\ge2\) no multi-message public-key PRC with a single-output decoder is robust against all such \(\delta\)-deletion channels for any $\delta>1-\gamma_q^{\mathrm{LCS}}$. For \(q=2\), the current rigorous bound \(\gamma_2\ge0.792665992\) rules out every constant \(\delta>0.207334008\). The proof uses pseudorandomness only to transfer an LCS event from uniform strings to independently sampled codewords and therefore also the resulting collision argument is information-theoretic and requires no secret key. We also extend the argument to list decoding: for every fixed constant $L$, provided the message space contains at least $L+1$ messages, no such PRC with output lists of size at most $L$ is robust for $\delta>1-\gamma_2^{(L+1)}$. Since \(\gamma_2^{(m)}=1/2+\Theta(1/\sqrt m)\), these thresholds approach \(1/2\). Our bounds are specific to public-key-dependent adversarial channels and do not apply to oblivious edit channels.

\end{abstract}

\section{Introduction}\label{sec:intro}

A \emph{pseudorandom error-correcting code} (PRC), as first introduced by Christ and
Gunn~\cite{ChristGunn}, is a keyed encoding scheme whose codewords are computationally indistinguishable from uniformly
random strings yet remain decodable with a secret key after passing through a noisy channel. PRCs underlie recent constructions for undetectable and robust watermarking
of the outputs of generative models~\cite{ChristGunn, GolowichMoitra, GunnZhaoSong}, where the pseudorandomness hides the
watermark and the error correction lets it survive edits.

The amount of noise that a PRC can tolerate is hence a central quantitative question. For the binary substitution channel this question has now been well studied~\cite{GGW,DMR}, whereas the \emph{deletion} channel is more difficult to study as removed symbols destroy positional synchronisation, rendering length-preserving substitution arguments inapplicable. We ask how large of a deletion fraction a fixed-alphabet PRC can withstand when the channel can access both the public key as well as the transmitted codeword, but not the secret key.

\subsection{Our results}
For every fixed alphabet size $q \ge2$ we prove that no multi-bit, undetectable public-key PRC with a single-valued decoder is robust against $\Cpk$ for any constant 
  \[
  \delta>1-\gamma_q^{\mathrm{LCS}},
  \]
  where $\gamma_q^{\mathrm{LCS}}$ is the asymptotic normalised LCS length of two
  uniform $q$-ary strings (Corollary~\ref{cor:qary}). In binary, current rigorous
  bounds yield impossibility for every $\delta>0.207334008$
  (Theorem~\ref{thm:main}).

The proof is organised around a \emph{Collision Lemma} (Lemma~\ref{lem:collision}). Suppose two independently sampled codewords have an LCS of length at least $\rho n$ with probability at least $1-\nu$. For every constant $\delta>1-\rho$, the lemma constructs two efficient $\delta$-deletion channels using only the public key. Each channel resamples a codeword for the other message and deletes its input to a canonical common subsequence. Coupling the resampled codeword with the other transmitted codeword makes the two outputs coincide while preserving each channel's standalone distribution. It follows that at least one of the corresponding decoding experiments succeeds with probability at most $\tfrac12+\tfrac{\nu}{2}$.

The same mechanism also extends to list decoding, as for every fixed constant $L\ge1$, provided the message space contains at least $L+1$ elements, no binary undetectable public-key PRC whose decoder outputs lists of size at most  $L$ is robust against $\Cpk$ for any constant
\[
\delta>1-\gm{L+1},
\]

where $\gm m$ is the asymptotic normalised LCS length of $m$ independent uniform binary strings (Theorem~\ref{thm:stair}). The attack maps $L+1$ codewords to a common subsequence which no list of size $L$ can decode to all $L+1$ messages, and since $\gm m=1/2+\Theta(1/\sqrt m)$~\cite{LiRenWen}, the resulting thresholds approach $1/2$ as $1/2-\Theta(1/\sqrt L)$.

\subsection{Related work}\label{sec:related}

PRCs were first introduced by Christ and Gunn~\cite{ChristGunn} who constructed LPN-based public-key zero-bit PRCs 
robust to a constant rate of substitutions and to i.i.d.\ deletions. Their syntax and security will form the basis for ours but we omit their standard soundness requirement because it is unused. Ghentiyala and Guruswami~\cite{GhentiyalaGuruswami} subsequently gave a public-key construction stemming from the planted-hyperloop assumption and a variant of Goldreich's local PRG, as
well as from weak planted XOR together with LPN.
Golowich and Moitra~\cite{GolowichMoitra} constructed edit-robust PRCs over alphabets that grow polynomially with the security parameter, which lies outside our fixed-alphabet setting. Christ, Golowich, Gunn, Moitra, and  Wichs~\cite{CGGMW} constructed secret-key, zero-bit binary PRCs robust to a constant fraction of worst-case edits. Their construction has neither the public sampler nor the distinct messages used by our collision attack.

Huang, Li, Mao, and Zhou~\cite{HLMZ} studied public-key PRCs for edit channels fixed outside the key-generation and encoding experiment. Their error patterns are chosen independently of the sampled keys and codeword and in particular they construct binary PRCs of rate $1/2-\epsilon$ robust against every sublinear-polynomial edit channel, together with higher-rate constructions over larger alphabets. These positive results are for oblivious channels and sublinear error, whereas we consider public-key-dependent channels with constant deletion fractions. Public-key-aware substitution attacks have been studied through \emph{adaptive robustness}. Cohen, Hoover, and Schoenbach~\cite{CHS} introduced adaptive-prompting robustness for watermarking, and Alrabiah, Ananth, Christ, Dodis, and Gunn~\cite{AACDG} formalised the corresponding PRC notion and proved adaptive substitution robustness up to $1/2-\epsilon$ for the Christ--Gunn zero-bit construction and up to $1/4-\epsilon$ for a single-bit variant. Their adversary selects the message and encoder randomness defining the reference codeword. A channel in $\Cpk$, however, is fixed before the experiment and acts on a freshly sampled codeword. Definition~\ref{def:adaptive-deletion} records the chosen-codeword deletion analogue, while our main theorem establishes the stronger fresh-sample impossibility.

The black-box separations of Garg, Gunn, and Wang~\cite{GGW} and Döttling, Müller, and Rajasree~\cite{DMR} apply to binary PRCs under length-preserving substitution noise, leveraging oracle separations and hypercontractivity. Our channels change length, and our separation is combinatorial and unconditional. Working under independent symbolwise corruption, Francati, Goonatilake, Pawar, Venturi, and Ateniese~\cite{FGPVA} derived an unconditional ceiling driven by the tension between soundness and robustness. Removing the assumption of soundness yields a distinct, non-overlapping threshold.

The LCS constant originates in the work of Chv\'atal and Sankoff~\cite{ChvatalSankoff}. For binary strings, the current rigorous bounds are
\[
0.792665992\le\gamma_2\le0.826280
\]
~\cite{HeinemanEtAl,Lueker}. Kiwi, Loebl, and Matou\v{s}ek~\cite{KLM} showed that
$\gamma_q^{\mathrm{LCS}}\sqrt q\to2$ as the alphabet size grows, while Li,
Ren, and Wen~\cite{LiRenWen} established $\gm m=1/2+\Theta(1/\sqrt m)$ for the
multiple-string constant underlying the list-decoding analysis.

In classical deletion coding, every sufficiently large binary codebook faces an
elementary $1/2$ list-decoding obstruction from monochromatic subsequences, and
Guruswami and Wang~\cite{GuruswamiWang} demonstrated the complementary
existence of positive-rate binary codes list-decodable from a $1/2-\epsilon$
fraction of deletions, and Bukh, Guruswami, and H{\aa}stad~\cite{BGH} connect
the pairwise constant $\gamma_2$ to the deletion radius of random binary
codes.

\section{Preliminaries and combinatorial input}\label{sec:prelim}

\subsection{Adversarial channel models}\label{sec:model}


Our impossibility result concerns the public-key-dependent deletion class $\Cpk$ of Definition~\ref{def:class}. A channel in this class receives the public key and transmitted word, may choose its deletion pattern as a function of both, and never accesses the secret key. The channel algorithm is fixed
before key generation and is applied to a freshly sampled codeword.

Two restricted cases will also be useful. A channel is \emph{key-independent} if it ignores the public-key argument, although it may still inspect the transmitted word. An \emph{oblivious} channel commits to its corruption
independently of both the sampled key and the transmitted word. 

Our attack uses $\pk$ only to run the public encoder and sample a partner codeword. It neither recovers the secret key nor chooses the randomness of the transmitted encoding which distinguishes the sampled-codeword model from the
chosen-codeword adaptive game of Definition~\ref{def:adaptive-deletion}, in which the adversary selects the message and encoder randomness defining the reference codeword.

\subsection{Definitions and notation}\label{sec:definitions}

Fix a constant alphabet size $q\ge2$, independent of $\lambda$, and let
$\Sigmaq:=\{0,1,\dots,q-1\}$. Deletions count symbols rather than bits in a
serialisation. Binary means $q=2$, when $\Sigma_2=\bits$.

For strings $a,b\in\Sigmaq^\ast$, write $a\subseq b$ when $a$ is a subsequence
of $b$, and let $\LCS(a,b)$ denote the length of their longest common
subsequence. For independent uniform $U,U'\in\Sigmaq^n$, define
\[
\gamma_q^{\mathrm{LCS}}
 :=\lim_{n\to\infty}\frac{\E[\LCS(U,U')]}{n}.
\]
The limit exists by superadditivity and Fekete's lemma
~\cite{ChvatalSankoff}. Write $\gamma_2:=\gamma_2^{\mathrm{LCS}}$. Any fixed
message used below denotes a family of the required length that is uniformly
computable from $1^\lambda$.

\begin{definition}[Pseudorandom code]\label{def:prc}
A \emph{$q$-ary pseudorandom error-correcting code} is a triple of PPT
algorithms $(\KGen,\Enc,\Dec)$ with polynomial-time-computable length functions
$n=n(\lambda)$ and $k=k(\lambda)$ satisfying
$1\le n(\lambda)\le\mathrm{poly}(\lambda)$ and
$0\le k(\lambda)\le\mathrm{poly}(\lambda)$:
\begin{itemize}
  \item $\KGen(1^\lambda)\to(\sk,\pk)$;
  \item $\Enc(\pk,m)\to\Sigmaq^n$ for $m\in\bits^k$;
  \item $\Dec(\sk,\cdot):\Sigmaq^\ast\to\bits^k\cup\{\bot\}$ is possibly
        randomised and single-valued.
\end{itemize}
The code is multi-bit if $k=k(\lambda)\ge 1$, so that at least two
distinct messages $m_0\neq m_1$ exist. (The case $k=0$, a single message, is
\emph{zero-bit}.)
\end{definition}

We omit the standard PRC soundness requirement because it is unused; hence our negative results also apply to PRCs satisfying it.

\begin{definition}[Pseudorandomness]\label{def:undet}
For every PPT distinguisher $A$ given $(1^\lambda,\pk)$ and access to an adaptive
chosen-message oracle, there is a negligible function
$\mu_A$ such that, for every security parameter $\lambda$,
\[
\left|
\Pr_{\substack{(\sk,\pk)\leftarrow\KGen(1^\lambda)\\ A,\Enc\text{-coins}}}
  [A^{\Enc(\pk,\cdot)}(1^\lambda,\pk)=1]
-\Pr_{\substack{(\sk,\pk)\leftarrow\KGen(1^\lambda)\\ A,\mathcal U\text{-coins}}}
  [A^{\mathcal U}(1^\lambda,\pk)=1]
\right|\le\mu_A(\lambda),
\]
where oracle queries lie in $\bits^{k(\lambda)}$, and the ideal oracle
$\mathcal U$ returns, on \emph{every} query (including repeats), a freshly and
independently drawn uniform value in $\Sigmaq^{n(\lambda)}$. The real oracle
likewise answers every query (including repeats) with an independent execution
of $\Enc(\pk,\cdot)$ on fresh coins.
\end{definition}

\begin{definition}[The public-key-dependent deletion class $\Cpk$]\label{def:class}
Fix a deletion fraction $\delta\in[0,1]$. A PPT machine $E$, taking as input
$(1^\lambda,\pk,y)$ and never reading $\sk$, belongs to $\Cpk$ iff,
for every $\lambda$, every $\pk$ in the support of the public-key component of
$\KGen(1^\lambda)$, and every $y\in\Sigmaq^{n(\lambda)}$:
\begin{enumerate}
  \item \textup{(deletion-only)} its output satisfies
        $E(1^\lambda,\pk,y)\subseq y$ with probability $1$ over $E$'s coins; and
  \item \textup{(budget)}
        $|E(1^\lambda,\pk,y)|\ge(1-\delta)n(\lambda)$ with probability $1$
        over $E$'s coins.
\end{enumerate}
The machine is uniform and total.
\end{definition}

\begin{definition}[Robustness]\label{def:robust}
The code is robust against $\Cpk$ if, for every $E\in\Cpk$, there is a
negligible function $\mu_E$ such that, for every $\lambda$ and every
$m\in\bits^{k(\lambda)}$,
\[
\Pr_{\substack{(\sk,\pk)\leftarrow\KGen(1^\lambda),\
x\leftarrow\Enc(\pk,m)\\E,\Dec\text{-coins}}}
\big[\Dec(\sk,E(1^\lambda,\pk,x))=m\big]
\ge1-\mu_E(\lambda).
\]
Here $\bot$ is a failure. The channel and message are fixed before key
generation, and the same $\mu_E$ applies to every message.
\end{definition}

\begin{definition}[Chosen-codeword adaptive deletion robustness]
\label{def:adaptive-deletion}
Fix a deletion fraction $\delta\in[0,1]$. The adaptive deletion experiment first
samples $(\sk,\pk)\leftarrow\KGen(1^\lambda)$ and gives
$(1^\lambda,\pk)$ to a PPT adversary $A$. The adversary outputs a message
$m\in\bits^{k(\lambda)}$, an encoder random tape $r$, and a string $z$. Set
$x:=\Enc(\pk,m;r)$. The output is \emph{admissible} when
\[
z\subseq x
\qquad\text{and}\qquad
|z|\ge(1-\delta)n(\lambda).
\]
The adversary wins when its output is admissible and
$\Dec(\sk,z)\ne m$, where the probability includes the decoder's coins. The
code is \emph{adaptively $\delta$-deletion robust} if every PPT adversary wins
with negligible probability. Thus, unlike the sampled-codeword experiment of
Definition~\ref{def:robust}, this adversary chooses the encoder randomness that
defines the reference codeword.
\end{definition}

\subsection{LCS concentration and transfer}\label{sec:combinatorial}

Fix a total order on $\Sigmaq^n$. Given $a,b\in\Sigmaq^n$,
canonically order the pair and run the standard LCS dynamic program with fixed
tie-breaking. Let $Z(a,b)$ be the resulting LCS and let
$J_a,J_b\subseteq[n]$ be the corresponding witnessing index sets. This rule
is deterministic, symmetric, and computable in $O(n^2)$ time, with
\[
(a_j)_{j\in J_a}=Z(a,b)=(b_j)_{j\in J_b},
\qquad
Z(a,b)=Z(b,a),
\qquad
|Z(a,b)|=\LCS(a,b).
\]

\begin{lemma}[Uniform lower tail]\label{lem:lowertail}
For every fixed rational $0<\rho<\gamma_2$, independent uniform
$U,U'\in\bits^n$, and all sufficiently large $n$,
\[
\Pr\big[\LCS(U,U')\ge\rho n\big]
\ \ge\ 1-\exp\!\big(-\tfrac14(\gamma_2-\rho)^2 n\big).
\]
Moreover, there is a constant $\beta_\rho>0$, depending only on $\rho$, such
that $\Pr[\LCS(U,U')\ge\rho n]\ge\beta_\rho$ for every $n\ge1$.
\end{lemma}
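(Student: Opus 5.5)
The plan is to combine the convergence of the mean, $\E[\LCS(U,U')]/n\to\gamma_2$, with McDiarmid's bounded-differences inequality applied to the function $f(u_1,\dots,u_n,u'_1,\dots,u'_n)=\LCS(u,u')$ of $2n$ independent uniform bits.

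\textbf{Concentration.} Changing any single coordinate (of $U$ or of $U'$) changes $\LCS$ by at most $1$. Indeed, deleting the affected position from any common subsequence loses at most one symbol, and the argument is symmetric. So $f$ has bounded differences $c_i=1$ for all $2n$ coordinates. McDiarmid gives
\[
\Pr\big[\LCS(U,U')\le \E[\LCS(U,U')]-t\big]\le\exp\!\big(-2t^2/(2n)\big)=\exp(-t^2/n).
\]

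\textbf{Locating the mean.} By superadditivity (concatenation of strings), $\E[\LCS]/n$ is bounded above by its limit $\gamma_2$ and converges to it. Since $\rho<\gamma_2$, for all sufficiently large $n$ we have
\[
\E[\LCS(U,U')]\ge \tfrac{\rho+\gamma_2}{2}\,n .
\]
Taking $t=\tfrac{\gamma_2-\rho}{2}n$ then gives failure probability at most $\exp(-\tfrac14(\gamma_2-\rho)^2n)$, which is exactly the stated bound. The only point needing care is that threshold ``sufficiently large $n$'' comes from the (non-quantitative) convergence in Fekete's lemma. This is fine, since the statement only asserts the bound eventually.

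\textbf{Bound for all $n\ge1$.} For the uniform constant $\beta_\rho$, let $N_\rho$ be the threshold above. For $n\ge N_\rho$ the probability is at least $1-\exp(-\tfrac14(\gamma_2-\rho)^2N_\rho)>0$. For each of the finitely many $n<N_\rho$, the event $U=U'$ has probability $2^{-n}>0$ and gives $\LCS=n\ge\rho n$. So the probability is at least $2^{-N_\rho}$. Taking $\beta_\rho$ to be the minimum of these two positive quantities gives a constant depending only on $\rho$, and this completes the argument.

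No step is a real obstacle. The mildest subtlety is making sure the bounded-difference constant is $1$ per bit, across $2n$ bits, so that the exponent comes out as $t^2/n$ rather than something weaker.
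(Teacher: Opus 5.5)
Your proposal is correct and follows the paper's proof essentially step for step. You use $1$-Lipschitz bounded differences over all $2n$ bits, apply McDiarmid with $t=\tfrac{\gamma_2-\rho}{2}n$ once the mean exceeds $\tfrac{\rho+\gamma_2}{2}n$, and handle the finitely many small $n$ with the event $U=U'$. The only cosmetic difference is that the paper enlarges the threshold so the tail bound is at least $1/2$ and sets $\beta_\rho=\min\{1/2,2^{-(N_0-1)}\}$, whereas your minimum of $1-\exp(-\tfrac14(\gamma_2-\rho)^2N_\rho)$ and $2^{-N_\rho}$ works equally well.
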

\begin{proof}
View $f:=\LCS(U,U')$ as a function of the $2n$ independent input bits.

\emph{$f$ is $1$-Lipschitz in each coordinate.} Fix all bits but one, say the
$j$-th bit of $U$, giving strings $a,a'\in\bits^n$ differing only at position
$j$, with $U'=b$ fixed. Let $S$ be a longest common subsequence of $a$ and $b$,
and fix witnessing index sets $I,J\subseteq[n]$ with
$(a_i)_{i\in I}=S=(b_j)_{j\in J}$. If $j\notin I$, then $S$ is itself a common
subsequence of $a'$ and $b$; otherwise deleting from $S$ the symbol matched at
position $j$ yields a common subsequence of $a'$ and $b$ of length
$\ge\LCS(a,b)-1$, so
$\LCS(a',b)\ge\LCS(a,b)-1$; by symmetry $|\LCS(a,b)-\LCS(a',b)|\le1$. The same
holds for any bit of $U'$. Hence each bounded-difference constant is $c_i=1$ and
$\sum_i c_i^2 = 2n$.

\emph{McDiarmid.} By the bounded-differences inequality, for every $t\ge0$,
\[
\Pr[f\le\E f - t]\ \le\ \exp\!\Big(-\frac{2t^2}{\sum_i c_i^2}\Big)
   \ =\ \exp\!\Big(-\frac{t^2}{n}\Big).
\]
Let $\alpha:=(\gamma_2-\rho)/2>0$. By the definition of $\gamma_2$, there is
$N_0$ such that $\E f\ge(\gamma_2-\alpha)n=(\rho+\alpha)n$ for all $n>N_0$.
Taking $t=\alpha n$, we have $\E f-t\ge\rho n$, and hence
\[
\Pr\big[\LCS(U,U')<\rho n\big]
 \ \le\ \exp(-\alpha^2 n)
 \ =\ \exp\!\big(-\tfrac14(\gamma_2-\rho)^2 n\big).
\]
Complementing yields the displayed bound. Enlarge $N_0$, if necessary, so that
the displayed probability is at least $1/2$ for $n\ge N_0$. For $n<N_0$, the
event $U=U'$ implies $\LCS(U,U')=n$ and has probability $2^{-n}$. Thus one may
take $\beta_\rho:=\min\{1/2,2^{-(N_0-1)}\}>0$. 
\end{proof}
\begin{lemma}[Transfer to real codewords]\label{lem:transfer}
Let $\mathcal C$ be an undetectable binary multi-bit public-key PRC, set
$m_0:=0^k$ and $m_1:=0^{k-1}1$, and draw
$x_i\leftarrow\Enc(\pk,m_i)$ independently under
$(\sk,\pk)\leftarrow\KGen(1^\lambda)$. For every fixed rational
$0<\rho<\gamma_2$ there exist a constant $N_\rho$ and a negligible function
$\mu_\rho$ such that, for every $\lambda$ with $n(\lambda)\ge N_\rho$,
\[
\Pr\big[\LCS(x_0,x_1)\ge\rho n\big]
\ \ge\ 1-\exp\!\big(-\tfrac14(\gamma_2-\rho)^2 n\big)-\mu_\rho(\lambda).
\]
For every block length, the same event has probability at least
$\beta_\rho-\negl(\lambda)$, where $\beta_\rho>0$ is from
Lemma~\ref{lem:lowertail}.
\end{lemma}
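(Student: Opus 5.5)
The plan is a direct reduction to the pseudorandomness of Definition~\ref{def:undet}: build a single PPT distinguisher $A$ that tests the LCS event, then bound the real-world probability using the uniform-world probability from Lemma~\ref{lem:lowertail}.

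The distinguisher $A_\rho$ works as follows. On input $(1^\lambda,\pk)$ with oracle $\mathcal O$, it queries $\mathcal O(m_0)$ and $\mathcal O(m_1)$ and obtains $x_0,x_1\in\bits^{n}$. Both $m_0=0^k$ and $m_1=0^{k-1}1$ are uniformly computable from $1^\lambda$, since $k(\lambda)$ is polynomial-time computable. It then runs the $O(n^2)$ LCS dynamic program. It outputs $1$ iff $\LCS(x_0,x_1)\ge\rho n$. Since $\rho$ is a fixed rational and $n\le\mathrm{poly}(\lambda)$, $A_\rho$ is PPT and depends only on $\rho$.

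Now compare the two worlds. With the real oracle, the two answers are independent executions of $\Enc(\pk,\cdot)$ on fresh coins under the sampled key. So $\Pr[A_\rho^{\Enc}=1]$ is exactly the probability in the statement. With the ideal oracle $\mathcal U$, the two answers are independent uniform strings in $\bits^n$ regardless of $\pk$. So $\Pr[A_\rho^{\mathcal U}=1]=\Pr[\LCS(U,U')\ge\rho n]$.

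Pseudorandomness gives a negligible $\mu_\rho:=\mu_{A_\rho}$ with
\[
\Pr[\LCS(x_0,x_1)\ge\rho n]\ \ge\ \Pr[\LCS(U,U')\ge\rho n]-\mu_\rho(\lambda).
\]
Take $N_\rho$ to be the threshold beyond which Lemma~\ref{lem:lowertail}'s exponential bound holds. Substituting that bound gives the first claim. Substituting instead the uniform bound $\beta_\rho$, valid for every $n\ge1$, gives the second claim with the same negligible $\mu_\rho$.

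I expect no serious obstacle. The only points needing care are bookkeeping. First, the same negligible function must serve for all $\lambda$; this holds because the distinguisher $A_\rho$ is fixed once $\rho$ is fixed. Second, the events in the two worlds must be identical functions of the oracle answers, which is immediate. Third, the real-world probability must be taken over the key as well, which is exactly how Definition~\ref{def:undet} averages.
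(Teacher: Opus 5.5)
Your proposal is correct and follows essentially the same route as the paper's proof. The paper also uses a two-query distinguisher on $m_0,m_1$ that evaluates the rational-threshold predicate $\mathbf 1[s\,\LCS(a,b)\ge rn]$ for $\rho=r/s$, so that $\gamma_2$ never needs to be computed. Pseudorandomness then transfers both bounds of Lemma~\ref{lem:lowertail}, with $\mu_\rho:=\mu_A$ and $N_\rho$ taken as that lemma's threshold, exactly as you do.
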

\begin{proof}
Write the fixed rational as $\rho=r/s$. The predicate
$P(a,b):=\mathbf1[s\LCS(a,b)\ge rn]$ is uniformly computable in
$O(n^2)=\mathrm{poly}(\lambda)$ time and does not require computing $\gamma_2$.
Consider the PPT distinguisher $A$ that, given $(1^\lambda,\pk)$ and an oracle
$\mathcal O$, computes $k(\lambda)$, constructs and queries $m_0$ and $m_1$,
receives $a,b$, and outputs $P(a,b)$. With the ideal
oracle the answers are independent uniform strings, so
Lemma~\ref{lem:lowertail} supplies both claimed lower bounds. With the real oracle
the answers are $x_0,x_1$. Pseudorandomness bounds the probability gap by
$\mu_A(\lambda)$. Take $N_\rho$ to be the threshold from
Lemma~\ref{lem:lowertail} and set $\mu_\rho:=\mu_A$. This proves both asserted
bounds and is the sole use of pseudorandomness in the main theorem.
\end{proof}

\section{Main collision lemma and deletion ceiling}\label{sec:collision}

\subsection{Collision lemma}

We isolate the core impossibility as a statement about \emph{any} fixed-alphabet
code with a public sampler and a single-valued decoder, parametrised by the
code's own pairwise-LCS rate. Neither a random-string LCS constant nor
pseudorandomness appears; both enter only in the specialisations below.

\begin{definition}[Public sampler]\label{def:sampler}
A \emph{public sampler} for a code is a PPT algorithm $S$ such that $S(\pk,m)$ is
distributed exactly as $\Enc(\pk,m)$. For a public-key code, $S:=\Enc$ is a
public sampler, since encoding is a public operation given $\pk$.
\end{definition}

\begin{lemma}[Collision Lemma]\label{lem:collision}
Let $\mathcal C=(\KGen,\Enc,\Dec)$ be a code with the syntax of
Definition~\ref{def:prc} (pseudorandomness not assumed), with single-valued
$\Dec$ and a public sampler $S$. Fix uniformly computable message families with
$m_0(\lambda)\neq m_1(\lambda)$ for every $\lambda$. Under
$(\sk,\pk)\leftarrow\KGen(1^\lambda)$, independently sample
$x_i\leftarrow\Enc(\pk,m_i)$ for $i\in\{0,1\}$. Suppose
\[
\Pr\big[\LCS(x_0,x_1)\ge\rho n\big]\ \ge\ 1-\nu
\]
for a fixed constant rate $0<\rho\le1$ and slack $\nu=\nu(\lambda)\in[0,1]$. Then for every constant deletion fraction $\delta\in(1-\rho,1]$ there exist channels $C_0,C_1\in\Cpk$ (each using no key material beyond $\pk$)
such that
\[
\min_{i\in\{0,1\}}\ \Pr\big[\Dec(\sk,C_i(1^\lambda,\pk,x_i))=m_i\big]
\ \le\ \tfrac12+\tfrac{\nu}{2}.
\]
\end{lemma}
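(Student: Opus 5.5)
Define the two channels explicitly and use a coupling argument. On input $(1^\lambda,\pk,y)$, channel $C_0$ first samples $y'\leftarrow S(\pk,m_1)$ with fresh coins. It then computes the canonical common subsequence $Z(y,y')$ with its index set $J_y$. If $|Z(y,y')|\ge(1-\delta)n$, it outputs $Z(y,y')$; this is a subsequence of $y$ obtained by deleting the positions outside $J_y$. Otherwise it outputs a fixed admissible fallback, for instance the prefix of $y$ of length $\lceil(1-\delta)n\rceil$. Channel $C_1$ is defined symmetrically, sampling its partner from $S(\pk,m_0)$. Both channels are PPT, since $Z$ costs $O(n^2)$ and $m_0,m_1$ are uniformly computable. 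Both read only $\pk$. Both satisfy the deletion-only and budget conditions of Definition~\ref{def:class} with probability $1$, so $C_0,C_1\in\Cpk$. Note that $\delta>1-\rho$ gives $(1-\delta)n<\rho n$, so the event $\LCS\ge\rho n$ forces the main branch.

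The key step is the coupling. Consider the experiment for $C_0$: $x_0\leftarrow\Enc(\pk,m_0)$, and $C_0$ draws $y'\sim\Enc(\pk,m_1)$ independently. Call this partner $x_1$. In the experiment for $C_1$, the transmitted $x_1$ and the internally sampled partner $x_0'\sim\Enc(\pk,m_0)$ are likewise independent given $\pk$. So in both experiments the pair (transmitted word, partner) has the same joint law as $(x_0,x_1)$ in the hypothesis.

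Realise both experiments on one probability space: one key pair, one pair $(x_0,x_1)$, and independent decoder coins for each decoding. Then $C_0$ sees $(x_0,x_1)$ and $C_1$ sees $(x_1,x_0)$. By symmetry of $Z$, whenever $\LCS(x_0,x_1)\ge\rho n$ both channels output the same string $w=Z(x_0,x_1)$. Each experiment's marginal law is exactly its standalone law, because the public sampler matches $\Enc$ exactly (Definition~\ref{def:sampler}).

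Let $G$ be the good event. Write $p_i$ for the success probability of experiment $i$, and let $D_0,D_1$ be the two decoder outputs. Then
\[
p_0+p_1 \le \Pr[\neg G]+\Pr[G\wedge D_0=m_0]+\Pr[G\wedge D_1=m_1].
\]
Condition on $(\sk,\pk,x_0,x_1)$ with $G$ holding. The decoder is single-valued and $m_0\ne m_1$, so for a randomised decoder we get $\Pr[\Dec(\sk,w)=m_0]+\Pr[\Dec(\sk,w)=m_1]\le 1$. Hence
\[
p_0+p_1\le \nu+(1-\nu)=1+\nu,
\]
and the minimum is at most $\tfrac12+\tfrac\nu2$.

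The main obstacle is getting the coupling exactly right. In particular, Definition~\ref{def:robust} takes probability over $\KGen$ for each experiment separately, so the shared key must be justified: both experiments begin with the same $\KGen$ distribution, so sharing the key is a valid coupling. A second point is that the bound on $\Pr[D_0=m_0]+\Pr[D_1=m_1]$ should be applied pointwise in $(\sk,w)$, using the decoder's marginal output distribution, so it holds even with independent decoder coins. Everything else is routine verification of admissibility and efficiency.
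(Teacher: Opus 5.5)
Your construction and coupling are the same as the paper's. You resample a partner with the public sampler, take the canonical symmetric LCS, and let each channel's partner be the other transmitted codeword, so that on $G$ both outputs equal $Z(x_0,x_1)$. Using independent decoder coins and bounding $\Pr[\Dec(\sk,w)=m_0]+\Pr[\Dec(\sk,w)=m_1]\le 1$ pointwise in $(\sk,w)$ is an equally valid substitute for the paper's shared decoder tape.

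Two steps in your write-up need fixing.

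\emph{The final count.} Your displayed inequality $p_0+p_1\le\Pr[\neg G]+\Pr[G\wedge D_0=m_0]+\Pr[G\wedge D_1=m_1]$ is false as written. Off $G$ both experiments can succeed, so $\Pr[\neg G]$ must appear with coefficient $2$; with $\Pr[G]=0$ and a decoder that always succeeds, the left side is $2$ and the right side is $1$. Your next line, $\nu+(1-\nu)=1+\nu$, is an arithmetic slip, and the two errors happen to cancel. The correct chain, with $q:=\Pr[\neg G]\le\nu$, is $p_0+p_1\le 2q+\Pr[G]=1+q\le 1+\nu$. This is exactly the paper's $(p_0-q)+(p_1-q)\le 1-q$.

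\emph{The threshold.} Comparing $|Z(y,y')|$ with $(1-\delta)n$, and outputting a prefix of length $\lceil(1-\delta)n\rceil$, uses an arbitrary real constant $\delta$. For such $\delta$ this need not be computable by a uniform PPT machine. The paper avoids this by fixing a rational $\widehat\delta$ with $1-\rho<\widehat\delta<\delta$ and using it in both places. The stricter budget keeps the channels in $\Cpk$, and $\rho n>(1-\widehat\delta)n$ still forces the main branch on $G$.
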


\begin{proof}
\emph{Construction.} Fix the canonical LCS rule $Z$ from the preliminaries.
Choose a rational constant $\widehat\delta$ such that
$1-\rho<\widehat\delta<\delta$. Such a choice exists by density of the
rationals. The channels below enforce the stricter $\widehat\delta$ deletion
budget, and hence also belong to the $\delta$-deletion class $\Cpk$.
On input $1^\lambda$, both channels compute $k(\lambda)$ and the target-message
pair. They contain the constant $\widehat\delta$, receive $\pk$ and a string
$y$, and are total on every input; channel $C_0$ is
Algorithm~\ref{alg:C0}.

\begin{algorithm}[H]
\caption{$C_0(1^\lambda,\pk,y)$: channel attacking $m_0$}
\label{alg:C0}
\begin{algorithmic}[1]
\Require $1^\lambda$, public key $\pk$, and string $y$;\quad fixed: uniformly
computable partner family $m_1$ and rational $\widehat\delta$
\Ensure a subsequence of $y$ obtained using at most a
$\widehat\delta$-fraction of deletions
\State $N\gets|y|$; uniformly compute $m_1=m_1(\lambda)$ from $1^\lambda$
\State $p\gets S(\pk,m_1)$
\If{$|p|\ne N$}
    \State \Return $y_{1\,..\,\lceil(1-\widehat\delta)N\rceil}$
\EndIf
\State compute $z\gets Z(y,p)$ and canonical witnessing sets $J_y,J_p$
       such that $(y_j)_{j\in J_y}=z=(p_j)_{j\in J_p}$
\If{$|z|\ge(1-\widehat\delta)N$}
    \State \Return $\big(y_j\big)_{j\in J_y}$
\Else
    \State \Return $y_{1\,..\,\lceil(1-\widehat\delta)N\rceil}$
\EndIf
\end{algorithmic}
\end{algorithm}

Channel $C_1$ is defined symmetrically, using $m_0$ as its partner message.
Both channels are PPT, access no key material beyond $\pk$, and always output
either a witnessed common subsequence of length at least
$(1-\widehat\delta)N$ or a prefix of that length. Hence
$C_0,C_1\in\Cpk$.

For the coupled experiment, we sample
$(\sk,\pk)\leftarrow\KGen(1^\lambda)$ and independent
$x_i\leftarrow\Enc(\pk,m_i)$, and set $z^\ast:=Z(x_0,x_1)$. We then run $C_0$ using
$x_1$ as its partner and $C_1$ using $x_0$ and since $S(\pk,m_i)$ is distributed identically to $\Enc(\pk,m_i)$, each coupled invocation has its standalone
distribution. Symmetry of $Z$ ensures that, whenever the in-budget branch is
taken, both channels output the same string $z^\ast$.
Algorithm~\ref{alg:Estar} records this coupling.

\begin{algorithm}[H]
\caption{$E^{\ast}$: the coupled experiment realising the collision}
\label{alg:Estar}
\begin{algorithmic}[1]
\State $(\sk,\pk)\gets\KGen(1^{\lambda})$
\State $x_0\gets\Enc(\pk,m_0)$;\quad $x_1\gets\Enc(\pk,m_1)$ 
\State \textbf{couple:} run $C_0(1^\lambda,\pk,x_0)$ with partner $p:=x_1$, and
$C_1(1^\lambda,\pk,x_1)$ with $p:=x_0$
        
\State $z^{\ast}\gets Z(x_0,x_1)$
\State sample one decoder random tape $r$ and use it for both decoding events below
\LineComment{On $G=\{\LCS(x_0,x_1)\ge\rho n\}$ (probability $\ge1-\nu$ by hypothesis):}
\State \hspace{1em} by symmetry of $Z$,
$C_0(1^\lambda,\pk,x_0)=C_1(1^\lambda,\pk,x_1)=z^{\ast}$
\State \hspace{1em} for the shared tape, the two success events are disjoint
because $\Dec$ is single-valued and $m_0\neq m_1$
\end{algorithmic}
\end{algorithm}

For each $i$, the coupled triple
\[
(\sk,x_i,C_i(1^\lambda,\pk,x_i))
\]
is distributed exactly as in the standalone robustness experiment: $x_i$ has
the correct encoding distribution, and the channel's partner is an independent
sample from the required distribution. This identity does not use pseudorandomness. We share one uniform decoder tape $r$ between the two coupled
invocations so as to preserve each invocation's marginal
distribution, so
\[
p_i:=\Pr[\Dec(\sk,C_i(1^\lambda,\pk,x_i);r)=m_i]
\]
is its standalone success probability.

Let $G:=\{\LCS(x_0,x_1)\ge\rho n\}$, so
$\Pr[\overline G]\le\nu$. Because $\rho>1-\widehat\delta$, both channels take
the in-budget branch on $G$. The symmetry of $Z$ then gives
\[
C_0(1^\lambda,\pk,x_0)
=
Z(x_0,x_1)
=
Z(x_1,x_0)
=
C_1(1^\lambda,\pk,x_1)
=
z^\ast.
\]

\emph{Contradiction bound.} Let $D_i:=\{\Dec(\sk,z^\ast;r)=m_i\}$. On $G$,
$C_i(1^\lambda,\pk,x_i)=z^\ast$, so
\[
\Pr[D_i\cap G]\ =\
\Pr[\{\Dec(\sk,C_i(1^\lambda,\pk,x_i);r)=m_i\}\cap G]
\ \ge\ p_i-q,
\]
where $q:=\Pr[\overline G]\le\nu$.
For the shared tape $r$, $\Dec$ is single-valued; since $m_0\neq m_1$,
$D_0\cap D_1=\varnothing$, so
$D_0\cap G$ and $D_1\cap G$ are disjoint subsets of $G$:
\[
(p_0-q)+(p_1-q)\ \le\ \Pr[D_0\cap G]+\Pr[D_1\cap G]
\ \le\ \Pr[G]\ =\ 1-q.
\]
Hence $p_0+p_1\le1+q\le1+\nu$, so
$\min_i p_i\le\tfrac{1+\nu}{2}=\tfrac12+\tfrac\nu2$.
\end{proof}

\subsection{Binary adversarial-deletion ceiling}\label{sec:main}


\begin{theorem}[Binary multi-bit adversarial-deletion ceiling]\label{thm:main}
For every constant deletion fraction $\delta\in(1-\gamma_2,1]$, every binary,
multi-bit, unique-decoding (single-valued $\Dec$, Definition~\ref{def:prc}),
undetectable public-key PRC is \emph{not} robust against the
public-key-dependent $\delta$-deletion class $\Cpk$. In particular, the current
bound $\gamma_2\ge0.792665992$ proves the result for every
$\delta>0.207334008$. More quantitatively, after fixing any rational
$1-\delta<\rho<\gamma_2$, the two constructed pairs satisfy, pointwise for all
sufficiently large $\lambda$,
\[
\min_{i\in\{0,1\}}\Pr[\Dec(\sk,C_i(1^\lambda,\pk,x_i))=m_i]
\ \le\ 1-\frac{\beta_\rho}{2}+\negl(\lambda).
\]
If $n=\omega(\log\lambda)$, the right-hand side sharpens to
$\tfrac12+\negl(\lambda)$.
\end{theorem}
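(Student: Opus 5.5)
The plan is to combine Lemma~\ref{lem:transfer} with the Collision Lemma (Lemma~\ref{lem:collision}). First I fix $\delta\in(1-\gamma_2,1]$ and pick a rational $\rho$ with $1-\delta<\rho<\gamma_2$; this is possible because the open interval is nonempty and the rationals are dense. I then take $m_0:=0^k$ and $m_1:=0^{k-1}1$, which are well defined and distinct since the code is multi-bit ($k\ge1$), and uniformly computable from $1^\lambda$. The public sampler is $S:=\Enc$ (Definition~\ref{def:sampler}).

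Lemma~\ref{lem:transfer} gives two bounds. For every block length,
\[
\Pr[\LCS(x_0,x_1)\ge\rho n]\ge\beta_\rho-\mu_\rho(\lambda),
\]
and, when $n\ge N_\rho$, the sharper bound
\[
\Pr[\LCS(x_0,x_1)\ge\rho n]\ge 1-e^{-(\gamma_2-\rho)^2n/4}-\mu_\rho(\lambda).
\]
I apply Lemma~\ref{lem:collision} with this $\rho$, slack $\nu:=\min\{1,\,1-\beta_\rho+\mu_\rho(\lambda)\}$, and deletion fraction $\delta>1-\rho$. This yields channels $C_0,C_1\in\Cpk$ with
\[
\min_i\Pr[\Dec(\sk,C_i(1^\lambda,\pk,x_i))=m_i]\le \tfrac12+\tfrac{\nu}{2}\le 1-\tfrac{\beta_\rho}{2}+\tfrac{\mu_\rho(\lambda)}{2}.
\]
This is the stated pointwise bound. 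If $n=\omega(\log\lambda)$, then $n\ge N_\rho$ eventually and $e^{-cn}$ with $c=(\gamma_2-\rho)^2/4$ is negligible in $\lambda$. Using the sharper $\nu$ therefore gives $\tfrac12+\negl(\lambda)$.

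To conclude non-robustness, suppose for contradiction that the code is robust. Then there are negligible $\mu_{C_0},\mu_{C_1}$ with success probability at least $1-\mu_{C_i}(\lambda)$ for every $\lambda$ and every message, in particular for $m_i$. Hence $1-\max_i\mu_{C_i}(\lambda)\le 1-\beta_\rho/2+\negl(\lambda)$, so $\beta_\rho/2\le\negl(\lambda)$ for all large $\lambda$. This contradicts $\beta_\rho>0$ being a constant. The numeric claim follows by plugging in $\gamma_2\ge0.792665992$, since $1-\gamma_2\le0.207334008$.

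The step that needs care is the quantifier order. Robustness demands one negligible function per channel, uniform over messages. Since the message families here vary with $\lambda$, I must check that the ``for every $m$'' quantifier in Definition~\ref{def:robust} covers them pointwise in $\lambda$, which it does. I must also handle small $n$, and this is exactly why the all-block-length bound via $\beta_\rho$ is used rather than the exponential tail. Everything else, including uniform computability and PPT membership of the channels, is already supplied by Lemma~\ref{lem:collision}.
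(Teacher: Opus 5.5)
Your proposal is correct and follows essentially the same route as the paper: fix a rational $1-\delta<\rho<\gamma_2$, use Lemma~\ref{lem:transfer} to supply the slack $\nu\le1-\beta_\rho+\negl(\lambda)$ (negligible when $n=\omega(\log\lambda)$), and invoke the Collision Lemma with $\Enc$ as the public sampler. Your closing contradiction via $1-\max_i\mu_{C_i}(\lambda)\le1-\beta_\rho/2+\negl(\lambda)$ is just a more explicit version of the paper's remark that one of the two fixed channel--message pairs violates the bound for infinitely many $\lambda$.
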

\begin{proof}
Fix a rational $1-\delta<\rho<\gamma_2$. By
Lemma~\ref{lem:transfer}, the Collision Lemma applies with
\[
\nu\le1-\beta_\rho+\negl(\lambda),
\]
and $\Enc$ is a public sampler. Hence
\[
\min_i\Pr[\Dec(\sk,C_i(1^\lambda,\pk,x_i))=m_i]
\le1-\frac{\beta_\rho}{2}+\negl(\lambda).
\]
Since there are only two channel--message pairs, one satisfies this bound for
infinitely many $\lambda$ and therefore violates robustness. If
$n=\omega(\log\lambda)$, Lemma~\ref{lem:transfer} instead gives
$\nu=\negl(\lambda)$, yielding the bound
$\tfrac12+\negl(\lambda)$.
\end{proof}

 \begin{corollary}[Fixed-alphabet deletion ceiling] \label{cor:qary}
Fix a constant alphabet size $q\ge2$. No multi-bit, undetectable public-key
PRC over $\Sigmaq$ with a single-valued decoder is robust against $\Cpk$ for
any constant
  \[
  \delta\in(1-\gamma_q^{\mathrm{LCS}},1].
  \]
\end{corollary}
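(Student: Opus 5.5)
The plan is to rerun the proof of Theorem~\ref{thm:main} over $\Sigmaq$, replacing $\gamma_2$ by $\gamma_q^{\mathrm{LCS}}$ throughout. The Collision Lemma (Lemma~\ref{lem:collision}) is already stated for an arbitrary fixed alphabet, and it needs neither pseudorandomness nor any particular LCS constant. So the only alphabet-dependent inputs to redo are Lemmas~\ref{lem:lowertail} and~\ref{lem:transfer}.

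\textbf{Step 1: $q$-ary lower tail.} Fix a rational $\rho$ with $1-\delta<\rho<\gamma_q^{\mathrm{LCS}}$; this is possible because $\delta>1-\gamma_q^{\mathrm{LCS}}$. Let $U,U'$ be independent uniform strings in $\Sigmaq^n$. Then $f=\LCS(U,U')$ is a function of $2n$ independent symbols. The Lipschitz argument in Lemma~\ref{lem:lowertail} never used that the alphabet is binary: changing one symbol changes the LCS by at most $1$. McDiarmid's inequality therefore gives
\[
\Pr[\LCS(U,U')<\rho n]\le\exp\!\big(-\tfrac14(\gamma_q^{\mathrm{LCS}}-\rho)^2n\big)
\]
for all large $n$, using $\E f/n\to\gamma_q^{\mathrm{LCS}}$, which holds by Fekete's lemma. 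For small $n$, the event $U=U'$ has probability $q^{-n}$, so some constant $\beta_{\rho,q}>0$ works for every $n$.

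\textbf{Step 2: transfer.} Set $m_0:=0^k$ and $m_1:=0^{k-1}1$; these are distinct because the code is multi-bit. Use the same distinguisher as in Lemma~\ref{lem:transfer}: it queries $m_0$ and $m_1$ and outputs the polynomial-time predicate $\mathbf1[s\,\LCS(a,b)\ge rn]$, where $\rho=r/s$. Under the ideal oracle the answers are independent uniform strings in $\Sigmaq^n$. Pseudorandomness therefore yields
\[
\Pr[\LCS(x_0,x_1)\ge\rho n]\ge\beta_{\rho,q}-\negl(\lambda).
\]

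\textbf{Step 3: apply the Collision Lemma.} Use the slack $\nu\le1-\beta_{\rho,q}+\negl(\lambda)$, and take $S:=\Enc$ as the public sampler. For $\delta\in(1-\rho,1]$ this produces $C_0,C_1\in\Cpk$ with
\[
\min_i\Pr[\Dec(\sk,C_i(1^\lambda,\pk,x_i))=m_i]\le1-\beta_{\rho,q}/2+\negl(\lambda).
\]
The right-hand side is bounded away from $1$. Since there are only two channel--message pairs, one of them fails the bound $1-\negl$ for infinitely many $\lambda$, which contradicts robustness (Definition~\ref{def:robust}).

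The only point requiring care is Step 1. There I must check that the bounded-differences constant stays $1$ per symbol for general $q$, and that the constant $\beta_{\rho,q}$ is uniform in $n$. Both follow verbatim from the binary proof, so I expect no real obstacle. The rest is substitution of $q$ for $2$.
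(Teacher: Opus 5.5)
Your proposal is correct and follows the paper's proof essentially verbatim. The paper likewise reruns Lemma~\ref{lem:lowertail} over $\Sigmaq$, using the same McDiarmid argument with $\Pr[U=U']=q^{-n}$ covering small block lengths, then transfers the rational-threshold LCS predicate to codewords as in Lemma~\ref{lem:transfer}, and finally invokes the alphabet-agnostic Collision Lemma; your explicit bookkeeping of $\nu\le1-\beta_{\rho,q}+\negl(\lambda)$ and the resulting $1-\beta_{\rho,q}/2+\negl(\lambda)$ bound just spells out what the paper leaves implicit by analogy with Theorem~\ref{thm:main}.
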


\begin{proof}
Fix a rational $1-\delta<\rho<\gamma_q^{\mathrm{LCS}}$. The proof of Lemma~\ref{lem:lowertail} applies over $\Sigmaq$: the concentration argument is unchanged, while for the finitely many smaller block lengths the event $U=U'$ has probability $q^{-n}>0$. The rational-threshold predicate $\mathbf1[\LCS(a,b)\ge\rho n]$ is polynomial-time computable, so the transfer argument of Lemma~\ref{lem:transfer} also applies. The alphabet-agnostic Collision Lemma then gives the result.
\end{proof}

\begin{corollary}[Chosen-codeword adaptive deletion ceiling]
\label{cor:adaptive-deletion}
For every constant deletion fraction $\delta\in(1-\gamma_2,1]$, every binary,
multi-bit, single-valued, undetectable public-key PRC is not adaptively
$\delta$-deletion robust in the sense of
Definition~\ref{def:adaptive-deletion}. More quantitatively, after fixing any
rational $1-\delta<\rho<\gamma_2$, there is a PPT adversary whose winning
probability is at least
\[
\frac{\beta_\rho}{2}-\negl(\lambda),
\]
and this lower bound sharpens to $\tfrac12-\negl(\lambda)$ when
$n=\omega(\log\lambda)$.
\end{corollary}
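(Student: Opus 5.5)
We construct a single PPT adversary $A$ for the chosen-codeword game of Definition~\ref{def:adaptive-deletion} that reuses the collision mechanism of Lemma~\ref{lem:collision}. Fix a rational $\rho$ with $1-\delta<\rho<\gamma_2$, and set $m_0:=0^k$ and $m_1:=0^{k-1}1$ as in Lemma~\ref{lem:transfer}.

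\textbf{The adversary.} On input $(1^\lambda,\pk)$, $A$ does the following.
\begin{enumerate}
\item It samples a uniform bit $b$ and fresh tapes $r_0,r_1$.
\item It computes $x_i:=\Enc(\pk,m_i;r_i)$; this is possible because encoding is public.
\item It computes $z:=Z(x_0,x_1)$.
\item It outputs $(m_b,r_b,z)$.
\end{enumerate}
Because $A$ itself selects the encoder tape $r_b$, the reference codeword in the game is exactly $x_b$. Since $z\subseq x_0$ and $z\subseq x_1$, admissibility holds whenever $|z|\ge(1-\delta)n$. This in turn holds on $G:=\{\LCS(x_0,x_1)\ge\rho n\}$, because $\rho>1-\delta$.

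\textbf{The winning bound.} Condition on $(\sk,\pk,x_0,x_1)$ and on the decoder's coins $r$. The string $z$ does not depend on $b$, and $b$ is independent of everything else. On $G$, single-valuedness of $\Dec$ together with $m_0\ne m_1$ implies that $\Dec(\sk,z;r)$ equals at most one of $m_0,m_1$. Hence $\Dec(\sk,z;r)\ne m_b$ with conditional probability at least $1/2$ over $b$. Averaging gives
\[
\Pr[A\text{ wins}]\ \ge\ \tfrac12\Pr[G].
\]

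\textbf{Bounding $\Pr[G]$.} The pair $(x_0,x_1)$ consists of independent honest encodings under a freshly generated key, which is exactly the distribution in Lemma~\ref{lem:transfer}. That lemma gives $\Pr[G]\ge\beta_\rho-\negl(\lambda)$ for every block length, so the winning probability is at least $\beta_\rho/2-\negl(\lambda)$. When $n=\omega(\log\lambda)$, the same lemma gives
\[
\Pr[G]\ \ge\ 1-\exp\!\big(-\tfrac14(\gamma_2-\rho)^2n\big)-\negl(\lambda)\ =\ 1-\negl(\lambda),
\]
and the bound sharpens to $\tfrac12-\negl(\lambda)$. Since $\beta_\rho>0$ is a constant, the adversary wins with non-negligible probability, so the code is not adaptively $\delta$-deletion robust.

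\textbf{Main obstacle.} The step needing care is the independence argument. Definition~\ref{def:adaptive-deletion} lets the adversary be randomised, so the hidden bit $b$ is legitimate. The decoder's coins are drawn independently of $A$'s output, and $z$ is a function of $(x_0,x_1)$ alone, so the averaging over $b$ is valid. I expect no further difficulty: unlike Lemma~\ref{lem:collision}, no coupling is needed here, because the adversary actually produces both codewords itself.
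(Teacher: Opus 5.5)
Your proof is correct and follows the paper's argument essentially verbatim: the adversary encodes both canonical messages itself, submits the canonical common subsequence $Z(x_0,x_1)$ with a uniformly random index, and wins with probability at least $\Pr[G]/2$, which Lemma~\ref{lem:transfer} bounds. The only difference is that the paper adds a fallback branch (submitting $x_I$ itself when $|z^\ast|<\rho n$) so that every output is admissible, whereas you let inadmissible outputs simply lose; this is harmless because your lower bound counts only the event $G$.
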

\begin{proof}
On input $(1^\lambda,\pk)$, the adversary samples independent encoder tapes
$r_0,r_1$ and sets
$x_i:=\Enc(\pk,m_i;r_i)$ for the canonical distinct messages
$m_0=0^k$ and $m_1=0^{k-1}1$. It computes
$z^\ast:=Z(x_0,x_1)$ and samples $I\leftarrow\{0,1\}$ uniformly. If
$|z^\ast|\ge\rho n$, it outputs $(m_I,r_I,z^\ast)$; otherwise it outputs
$(m_I,r_I,x_I)$. The output is always admissible: the fallback is the identity,
while in the first branch $z^\ast\subseq x_I$ and
$|z^\ast|\ge\rho n>(1-\delta)n$.

Let $G:=\{\LCS(x_0,x_1)\ge\rho n\}$. On $G$, both possible choices of $I$
submit the same string $z^\ast$. Since the decoder is single-valued,
\[
\Pr[\Dec(\sk,z^\ast)=m_0]
+\Pr[\Dec(\sk,z^\ast)=m_1]\le1,
\]
where both probabilities use the decoder's fresh coins. Hence a uniform $I$
causes a decoding failure with conditional probability at least $1/2$, and the
adversary's winning probability is at least $\Pr[G]/2$. By
Lemma~\ref{lem:transfer}, $\Pr[G]\ge\beta_\rho-\negl(\lambda)$ for every block
length and $\Pr[G]\ge1-\negl(\lambda)$ when
$n=\omega(\log\lambda)$, proving both claims.
\end{proof}

\section{List decoding}\label{sec:stair}

We extend the collision construction to $L+1$ messages to defeat list-decoders of list
size $L$. The combinatorial input is the $m$-wise Chv\'atal--Sankoff constant. In this section, a list-decoding PRC has the same key generation, encoding,
  and pseudorandomness requirements as Definition~\ref{def:prc}, but
  \[
  \Dec(\sk,\cdot):\Sigmaq^\ast\to
  \{T\subseteq\bits^{k(\lambda)}:|T|\le L\}.
  \]

\begin{definition}[List decoder]\label{def:list}
An \emph{$L$-list decoder} is possibly randomised and returns a subset of the message
space, and satisfies $|\Dec(\sk,z;r)|\le L$ for every $z$ and every fixing $r$
of its coins. The code is \emph{robust} if, for every $E\in\Cpk$, there is a
negligible function $\mu_E$ such that, for every security parameter $\lambda$
and every message $m\in\bits^{k(\lambda)}$,
\[
\Pr_{\substack{(\sk,\pk)\leftarrow\KGen(1^\lambda),\ x\leftarrow\Enc(\pk,m)\\
E,\Dec\text{-coins}}}
\big[m\in\Dec(\sk,E(1^\lambda,\pk,x))\big]\ge1-\mu_E(\lambda).
\]
The empty list represents failure. The probability is over key generation,
encoding, the channel's coins, and decoder coins; there is no distribution over
messages, and the same negligible bound $\mu_E$ applies to every message at each
$\lambda$. List size $L=1$ is the single-valued case of
Definition~\ref{def:prc}, after identifying $\bot$ with the empty list.
\end{definition}

\begin{definition}[$m$-wise Chv\'atal--Sankoff constant]\label{def:gm}
For independent uniform $U^{(1)},\dots,U^{(m)}\in\bits^n$, let
$\gm{m}:=\lim_{n\to\infty}\E[\LCS(U^{(1)},\dots,U^{(m)})]/n$, when the limit
exists. Thus $\gm2=\gamma_2$.
\end{definition}

\begin{lemma}[Existence and monotonicity]\label{lem:gm-exist}
For every constant $m\ge2$, $\gm{m}$ exists and equals $\sup_n\E[\LCS_m]/n$;
moreover $\gm{m+1}\le\gm{m}\le\gamma_2$, and each finite-$n$ value
$\E[\LCS_m]/n$ is a lower bound on $\gm m$.
\end{lemma}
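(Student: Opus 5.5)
The plan is to run the classical Chv\'atal--Sankoff superadditivity argument for $m$ strings and then apply Fekete's lemma. Write $f(n):=\E[\LCS(U^{(1)},\dots,U^{(m)})]$ with $U^{(j)}\in\bits^n$ independent and uniform.

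\emph{Superadditivity.} For lengths $n_1,n_2$, split each $U^{(j)}\in\bits^{n_1+n_2}$ as a prefix of length $n_1$ followed by a suffix of length $n_2$. Concatenating a common subsequence of the $m$ prefixes with a common subsequence of the $m$ suffixes gives a common subsequence of the full strings. Hence, pointwise,
\[
\LCS(U^{(1)},\dots,U^{(m)})\ \ge\ \LCS(\text{prefixes})+\LCS(\text{suffixes}).
\]
The prefixes are independent uniform strings of length $n_1$, and likewise for the suffixes. Taking expectations therefore gives $f(n_1+n_2)\ge f(n_1)+f(n_2)$.

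\emph{Existence of the limit.} We also have $0\le f(n)\le n$, so $f(n)/n$ is bounded. Fekete's lemma then gives that $\lim_n f(n)/n$ exists and equals $\sup_n f(n)/n$. This is the first claim, and it immediately yields the last claim: every finite-$n$ ratio $f(n)/n$ lies below the supremum, so it is a lower bound on $\gm m$.

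\emph{Monotonicity.} Any common subsequence of $U^{(1)},\dots,U^{(m+1)}$ is in particular a common subsequence of $U^{(1)},\dots,U^{(m)}$. So $\LCS_{m+1}\le\LCS_m$ pointwise under the natural coupling, where the first $m$ strings are shared. Taking expectations, dividing by $n$, and letting $n\to\infty$ gives $\gm{m+1}\le\gm m$. Iterating down to $m=2$, and using $\gm2=\gamma_2$ from Definition~\ref{def:gm}, gives $\gm m\le\gamma_2$.

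There is no real obstacle here. The only point needing care is in the superadditivity step: the prefix and suffix blocks must have exactly the uniform distributions of lengths $n_1$ and $n_2$, which holds because the bits are i.i.d. Everything else is bookkeeping.
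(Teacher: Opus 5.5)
Your proposal is correct and follows essentially the same argument as the paper: pointwise superadditivity by cutting all $m$ strings at a common index, Fekete's lemma with the bound $0\le f(n)\le n$ to get the limit as a supremum (which gives the finite-$n$ lower bounds), and the pointwise inequality $\LCS_{m+1}\le\LCS_m$ together with the base case $\gm2=\gamma_2$ for monotonicity. You also state explicitly that the prefix and suffix blocks are independent uniform strings; the paper uses this fact without comment.
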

\begin{proof}
Write $\ell^{(m)}_n:=\E[\LCS(U^{(1)},\dots,U^{(m)})]$. Cutting all $m$ strings at
the same index $n_1$ into independent uniform blocks and concatenating a common
subsequence of the first blocks with one of the second blocks shows
$\LCS_m$ is superadditive pointwise, hence $\ell^{(m)}_{n_1+n_2}\ge
\ell^{(m)}_{n_1}+\ell^{(m)}_{n_2}$. As $0\le\ell^{(m)}_n\le n$, Fekete's lemma
gives $\ell^{(m)}_n/n\to\sup_t\ell^{(m)}_t/t=:\gm m$. Since every ratio
$\ell^{(m)}_n/n$ is at most this supremum, each finite-$n$ ratio is a lower
bound on $\gm m$. Any string common to
$m+1$ strings is common to any $m$ of them, so $\LCS_{m+1}\le\LCS_m$ pointwise,
giving $\gm{m+1}\le\gm m$; the base $\gm2=\gamma_2$ gives $\gm m\le\gamma_2$.
\end{proof}

\begin{lemma}[Multiple-string rate~\cite{LiRenWen}]\label{lem:gm-limit}
There are absolute constants $0<c_1<c_2$ such that, for every integer $m\ge2$,
\[
\frac12+\frac{c_1}{\sqrt m}\ \le\ \gm m\ \le\
\frac12+\frac{c_2}{\sqrt m}.
\]
In particular, $\gm m>1/2$ for every finite $m$ and $\gm m\to1/2$.
\end{lemma}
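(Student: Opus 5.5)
The statement has two halves of quite different difficulty, and I would prove them separately. By Lemma~\ref{lem:gm-exist}, $\gm m=\lim_n \E[\LCS_m]/n=\sup_n \E[\LCS_m]/n$. So the upper bound needs an estimate that holds for all large $n$, while the lower bound only needs one good block length. The final clauses ($\gm m>1/2$ and $\gm m\to1/2$) follow immediately from the two-sided bound.

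\emph{Upper bound (first moment).} The key observation is that for a fixed binary word $w$ of length $\ell$ and uniform $U\in\bits^n$, the probability $\Pr[w\subseq U]$ does not depend on $w$. Indeed, greedy embedding waits a geometric time with parameter $1/2$ for each symbol, so the event is exactly $\{\Bin(n,1/2)\ge\ell\}$. Since the $m$ strings are independent, the expected number of common subsequences of length $\ell$ is
\[
2^{\ell}\,\Pr[\Bin(n,1/2)\ge\ell]^{m}.
\]
Take $\ell=(1/2+c/\sqrt m)n$ and apply Hoeffding to get $\Pr[\Bin(n,1/2)\ge\ell]\le\exp(-2c^2n/m)$. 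The expectation is then at most $2^{n}e^{-2c^2 n}$, which is exponentially small once $2c^2>\ln 2$. By Markov, $\LCS_m<\ell$ with probability $1-e^{-\Omega(n)}$. Because $\LCS_m\le n$, this gives $\E[\LCS_m]/n\le 1/2+c/\sqrt m+o(1)$, and hence the upper bound with any $c_2>\sqrt{(\ln 2)/2}$.

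\emph{Lower bound (explicit strategy on a block).} By the supremum characterisation it suffices to exhibit one block length $n=n(m)$ and a common-subsequence strategy with expected length at least $(1/2+c_1/\sqrt m)n$. The trivial strategy, taking the all-zeros word (or all-ones) of length $\min_j \#_0(U^{(j)})$, only gives $1/2-o(1)$. The gain must come from the $\Theta(\sqrt n)$ fluctuations in each individual string, exploited coherently across all $m$ strings. My first attempt will be an adaptive greedy rule that keeps a pointer into each string. At each step it chooses the next symbol $b$ as the majority among the symbols currently under the pointers. Every pointer then advances to its next occurrence of $b$: strings in the majority pay cost $1$, and strings in the minority pay $1+\Geo(1/2)$. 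For large $m$ the majority has size $m/2+\Theta(\sqrt m)$, so the typical cost is slightly below $2$ per output symbol. The real constraint is the worst string, so I would track the vector of pointer lags as a coupled random walk and bound the maximum lag. The target is to show that the maximum lag grows at rate at most $2-\Theta(1/\sqrt m)$ per output symbol, which would give the rate $1/2+\Theta(1/\sqrt m)$.

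\emph{Main obstacle.} The main obstacle is this lower-bound analysis. The naive greedy rule only controls the average lag, whereas the length of the output is governed by the slowest of the $m$ strings. A union bound over $m$ strings loses a $\sqrt{\log m}$ factor, which would kill the $1/\sqrt m$ gain. I expect to need a rebalancing modification: choose $b$ so that the lagging strings are favoured rather than following the plain majority, which keeps the lag vector in a bounded region. I would then apply a drift (Lyapunov) argument to the maximum lag. Failing a clean self-contained argument, I would invoke the block construction of~\cite{LiRenWen} for this direction, since the present paper uses only the qualitative consequences $\gm m>1/2$ and $\gm m\to1/2$.
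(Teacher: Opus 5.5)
Your route differs from the paper's, which gives no argument: it simply cites Theorem~1.1 of Li, Ren, and Wen for both inequalities. Your upper bound is a correct, self-contained first-moment proof. Because $\Pr[w\subseq U]=\Pr[\Bin(n,\tfrac12)\ge\ell]$ for every fixed $w$ of length $\ell$ (the same fact as Lemma~\ref{lem:embed}), independence and a union bound over $2^\ell$ words give $\Pr[\LCS_m\ge\ell]\le 2^{n}e^{-2c^2n}$. Hence any $c_2>\sqrt{(\ln 2)/2}$ works for every $m\ge2$. For the lower bound you fall back on the citation, exactly as the paper does, so the proposal as a whole is sound. What your self-contained half buys is independence from the reference and explicit constants.

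However, the ``main obstacle'' you name for the lower bound is not real, and your plain majority greedy already proves it.
\begin{itemize}
\item Every pointer always rests on a not-yet-inspected position: a majority string moves to $p+1$, and a minority string scans fresh positions to the next $b$ and moves one past it. So the cost vectors $(c_1(s),\dots,c_m(s))$ are i.i.d.\ across steps~$s$.
\item Each string's cost therefore has mean $\mu_m=1+2\pi_m$, where $\pi_m$ is the probability that a given string is in the minority. For $m=2r+1$ this is $\pi_m=\tfrac12\bigl(1-\binom{2r}{r}4^{-r}\bigr)$.
\item The greedy output length $T_n=\max\{t:\max_j\sum_{s\le t}c_j(s)\le n\}$ lower-bounds $\LCS_m$. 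Since $m$ is fixed while $n\to\infty$, the strong law applied to each of the finitely many strings gives $T_n/n\to1/\mu_m$ almost surely. Bounded convergence then gives $\gm m\ge 1/\mu_m$.
\end{itemize}
Taking the maximum over the $m$ strings costs only an additive $o(n)$ in length (indeed $O(\sqrt{n\log m})$). It does not cost a multiplicative loss in the $1/\sqrt m$ gain, so no rebalancing or Lyapunov argument is needed.

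To get explicit constants, use $\binom{2r}{r}\ge 4^r/(2\sqrt r)$. This gives $\gm{2r+1}\ge \bigl(2-\binom{2r}{r}4^{-r}\bigr)^{-1}\ge\tfrac12+\tfrac{1}{8\sqrt r}$. Monotonicity $\gm{2r}\ge\gm{2r+1}$ from Lemma~\ref{lem:gm-exist} handles even $m$, so $c_1=1/8$ works for all $m\ge2$. Together with your upper bound, this makes the whole lemma elementary.

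Finally, your remark that the paper uses only the qualitative consequences is not quite accurate. The abstract, introduction and conclusion state the $1/2-\Theta(1/\sqrt L)$ rate, which needs both inequalities. With the completion above you obtain that quantitative statement without the citation.
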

\begin{proof}
This is Theorem~1.1 of Li, Ren, and Wen~\cite{LiRenWen}, translated from their
notation $\gamma_{2,d}$ (with $d=m$) to $\gm m$.
\end{proof}


\begin{lemma}[$m$-wise canonical rule and concentration]\label{lem:mwise}
For every constant integer $m\ge2$ there is a deterministic,
permutation-invariant rule
$Z(x_1,\dots,x_m)$ that is a common subsequence of all inputs with
$|Z|=\LCS_m$, computable in $O(n^m)$ time. Given $Z$, deterministic witnessing
sets $J_i\subseteq[n]$ satisfying $(x_i[j])_{j\in J_i}=Z$ can additionally be
computed in $O(mn)$ time. Moreover, let $\mathcal C$ be an undetectable
binary public-key PRC, let $\mu_1,\dots,\mu_m$ be uniformly computable message families, sample $(\sk,\pk)\leftarrow\KGen(1^\lambda)$ and, conditional on $\pk$, independently draw $x_i\leftarrow\Enc(\pk,\mu_i)$. For every
fixed rational $0<\rho<\gm m$ there are a constant $N_{m,\rho}$ and a
negligible function $\mu_{m,\rho}^{\mathrm{undet}}$ such that, for every
$\lambda$ with $n(\lambda)\ge N_{m,\rho}$,
\[
\Pr\big[\LCS(x_1,\dots,x_m)\ge\rho n\big]
\ \ge\ 1-\exp\!\big(-\tfrac{(\gm m-\rho)^2n}{2m}\big)
-\mu_{m,\rho}^{\mathrm{undet}}(\lambda).
\]
For every block length, the same event has probability at least
$\beta_{m,\rho}-\negl(\lambda)$ for a constant $\beta_{m,\rho}>0$.
\end{lemma}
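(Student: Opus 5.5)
The lemma has three parts: the canonical $m$-wise rule, the witness extraction, and the concentration/transfer bound. I would handle them in that order, reusing the pairwise arguments of Lemmas~\ref{lem:lowertail} and~\ref{lem:transfer}.

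\emph{Canonical rule.} Fix a total order on $\bits^n$. Given $(x_1,\dots,x_m)$, first sort the tuple into canonical (nondecreasing) order. Then run the standard $m$-dimensional LCS dynamic program over index vectors in $[0,n]^m$, using a fixed tie-breaking rule. This has $(n+1)^m$ cells, each using $O(2^m)=O(1)$ work for constant $m$, so $O(n^m)$ time in total. Because the DP only ever sees the sorted tuple, its output $Z$ is permutation-invariant. By correctness of the DP, $Z$ is a common subsequence of all inputs with $|Z|=\LCS_m$. Witnessing sets are obtained from $Z$ alone by greedy leftmost matching of $Z$ into each $x_i$ separately. Each greedy match is a single $O(n)$ scan, giving $O(mn)$ in total, and it is deterministic. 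It succeeds because $Z\subseq x_i$.

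\emph{Uniform concentration.} Let $f:=\LCS(U^{(1)},\dots,U^{(m)})$ be viewed as a function of $mn$ independent bits. The pairwise argument shows that flipping one bit changes $f$ by at most $1$: in a common subsequence, drop the single matched symbol at that position, if any. McDiarmid's inequality then gives
\[
\Pr[f\le\E f-t]\le\exp\!\big(-2t^2/(mn)\big).
\]
Lemma~\ref{lem:gm-exist} shows $\E f/n\to\gm m$. So for $n\ge N_{m,\rho}$ we have $\E f\ge(\rho+\alpha)n$ with $\alpha:=(\gm m-\rho)/2$. Taking $t=\alpha n$ gives failure probability at most
\[
\exp\!\big(-2\alpha^2 n/m\big)=\exp\!\big(-(\gm m-\rho)^2n/(2m)\big),
\]
which is exactly the displayed bound. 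For small block lengths, enlarge $N_{m,\rho}$ so that this bound is at least $1/2$ whenever $n\ge N_{m,\rho}$. For $n<N_{m,\rho}$, use the event that all $m$ strings are equal, which has probability $2^{-(m-1)n}$. This yields $\beta_{m,\rho}:=\min\{1/2,2^{-(m-1)(N_{m,\rho}-1)}\}$.

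\emph{Transfer.} Write $\rho=r/s$. Build a PPT distinguisher that computes the $\mu_i(\lambda)$, queries its oracle on each, and outputs $\mathbf 1[s\LCS_m(a_1,\dots,a_m)\ge rn]$. This predicate is computable in $O(n^m)=\poly(\lambda)$ time since $m$ is constant. Repeated messages cause no problem, because the oracle answers every query with fresh, independent samples in both worlds. With the ideal oracle the $m$ answers are independent uniform strings, so the uniform bounds above apply. With the real oracle the answers are distributed exactly as the $x_i$ are, conditional on $\pk$. Pseudorandomness bounds the gap between the two worlds by a negligible $\mu^{\mathrm{undet}}_{m,\rho}$, which gives both claims.

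The main obstacle is bookkeeping rather than a new idea. The points to check are that the exponent comes out as $(\gm m-\rho)^2n/(2m)$, which the choice $t=\alpha n$ handles, and that permutation invariance is really achieved, which the sorting step handles. I expect no genuine difficulty beyond that.
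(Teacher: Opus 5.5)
Your proposal is correct and matches the paper's proof essentially step for step: sorting the inputs canonically before the $m$-dimensional DP, greedy leftmost embeddings for the witnessing sets, McDiarmid over $mn$ bits with bounded differences $1$ and $t=\alpha n$, the all-strings-equal event giving the same $\beta_{m,\rho}$ for short block lengths, and an $m$-query distinguisher for the transfer. Your remark that repeated message families are harmless, because the oracle answers every query with a fresh sample in both worlds, is a useful detail that the paper leaves implicit.
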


\begin{proof}
  Canonically sort the inputs using the fixed order $\prec$ and apply the
  standard $m$-dimensional LCS dynamic program with deterministic tie-breaking.
  The resulting longest common subsequence $Z$ depends only on the input
  multiset and is therefore permutation-invariant. The dynamic-programming table
  has $O(n^m)$ entries, and greedy leftmost embeddings of $Z$ in the original
  inputs produce the witnessing sets $J_i$ in $O(mn)$ time.

  Let
  \[
  \alpha:=\frac{\gm m-\rho}{2}>0.
  \]
  The value $\LCS_m$ changes by at most one when any one of the $mn$ input bits
  is changed. For all sufficiently large $n$,
  \[
  \E[\LCS_m]\ge(\gm m-\alpha)n=(\rho+\alpha)n.
  \]
  McDiarmid's inequality therefore gives
  \[
  \Pr[\LCS_m<\rho n]
  \le
  \exp\!\left(-\frac{2\alpha^2n^2}{mn}\right)
  =
  \exp\!\left(-\frac{(\gm m-\rho)^2n}{2m}\right).
  \]
  Choose $N_{m,\rho}$ large enough that this bound is valid and the corresponding
  success probability is at least $1/2$ whenever $n\ge N_{m,\rho}$. For smaller
  $n$, the event that all $m$ uniform strings are identical has probability
  $2^{-(m-1)n}$ and implies $\LCS_m=n$. Hence
  \[
  \beta_{m,\rho}
  :=
  \min\!\left\{\frac12,
  2^{-(m-1)(N_{m,\rho}-1)}\right\}>0
  \]
  is a uniform lower bound for every block length.

  Finally, write the fixed rational $\rho$ as $r/s$. Because $m$ is constant,
  the predicate
  \[
  \mathbf 1[s\LCS_m\ge rn]
  \]
  is computable in $O(n^m)$ time. An $m$-query distinguisher evaluates this
  predicate on independent encodings of the message families. Under the ideal
  oracle its inputs are independent uniform strings, so the bounds above apply;
  pseudorandomness transfers both bounds to the real codewords up to negligible
  error.
  \end{proof}

\begin{theorem}[List-size staircase]\label{thm:stair}
Let $L\ge1$ be a constant, and suppose $2^{k(\lambda)}\ge L+1$ for all
sufficiently large $\lambda$. For every constant deletion fraction
$\delta\in(1-\gm{L+1},1]$, no binary, undetectable public-key PRC that decodes
to lists of size at most $L$ is robust against $\Cpk$. More quantitatively, for
$m=L+1$ and any rational $1-\delta<\rho<\gm m$, the constructed pairs satisfy,
pointwise for all sufficiently large $\lambda$,
\[
\min_{i\in[m]}\Pr[m_i\in\Dec(\sk,C_i(1^\lambda,\pk,x_i))]
\ \le\ 1-\frac{\beta_{m,\rho}}{m}+\negl(\lambda).
\]
If $n=\omega(\log\lambda)$, the right-hand side sharpens to
$1-1/(L+1)+\negl(\lambda)$.
\end{theorem}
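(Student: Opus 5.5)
The plan is to generalise the Collision Lemma from two to $m=L+1$ messages. First fix $L+1$ distinct uniformly computable message families $m_1,\dots,m_{L+1}$, for instance the binary representations of $0,1,\dots,L$ padded to $k(\lambda)$ bits. This is possible for all sufficiently large $\lambda$ because $2^{k}\ge L+1$; for the finitely many small $\lambda$ the bound is not claimed, so any fixed choice will do there. Next fix a rational $\rho$ with $1-\delta<\rho<\gm m$, and a rational $\widehat\delta$ with $1-\rho<\widehat\delta<\delta$.

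For each $i\in[m]$ the channel $C_i(1^\lambda,\pk,y)$ works as follows. It samples partners $p_j\leftarrow\Enc(\pk,m_j)$ independently for every $j\ne i$. If all lengths equal $|y|$, it computes $Z:=Z(y,(p_j)_{j\ne i})$ using the permutation-invariant rule of Lemma~\ref{lem:mwise}, together with a witnessing set $J_y$. If $|Z|\ge(1-\widehat\delta)N$ it outputs $(y_j)_{j\in J_y}=Z$; otherwise it outputs the prefix of length $\lceil(1-\widehat\delta)N\rceil$. Because $m$ is constant, this runs in $O(n^m)$ time, which is polynomial. The output is always a subsequence of $y$ within the $\widehat\delta$ budget, so $C_i\in\Cpk$.

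The coupled experiment samples $\pk$ and independent $x_j\leftarrow\Enc(\pk,m_j)$ for all $j\in[m]$. It runs each $C_i$ on $x_i$ with partners $(x_j)_{j\ne i}$, and all decodings share a single decoder tape $r$. Each triple $(\sk,x_i,C_i(x_i))$ then has exactly its standalone distribution, so
\[
p_i:=\Pr[m_i\in\Dec(\sk,C_i(1^\lambda,\pk,x_i);r)]
\]
is the standalone success probability. Let $G:=\{\LCS(x_1,\dots,x_m)\ge\rho n\}$. On $G$ every channel takes the in-budget branch, and permutation invariance gives $C_i(x_i)=z^\ast:=Z(x_1,\dots,x_m)$ for every $i$. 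Since $|\Dec(\sk,z^\ast;r)|\le L<m$ for each tape, at most $L$ of the events $D_i:=\{m_i\in\Dec(\sk,z^\ast;r)\}$ can occur simultaneously. Hence $\sum_i\mathbf 1[D_i\cap G]\le L\cdot\mathbf 1[G]$, and taking expectations with $q:=\Pr[\overline G]$ gives
\[
\sum_i(p_i-q)\le L(1-q),\qquad\text{so}\qquad \min_i p_i\le\frac{L}{m}+q\Big(1-\frac Lm\Big)=1-\frac{1-q}{m}.
\]

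Finally, Lemma~\ref{lem:mwise} supplies $1-q\ge\beta_{m,\rho}-\negl(\lambda)$ at every block length. This yields $\min_ip_i\le1-\beta_{m,\rho}/m+\negl(\lambda)$. When $n=\omega(\log\lambda)$, the exponential term makes $q$ negligible, giving $1-1/(L+1)+\negl(\lambda)$. As in Theorem~\ref{thm:main}, some single index $i$ attains the minimum for infinitely many $\lambda$, and since that bound is a constant below $1$, robustness fails for that pair. The step I expect to need the most care is the multi-way collision itself. It requires that each channel, fed $m-1$ resampled partners, outputs exactly the same string $z^\ast$ as every other channel. This relies on the witnessing sets extracting precisely $Z$ from the channel's own input, and on $Z$ depending only on the multiset of inputs. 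The counting bound with lists of size $L$ is then routine.
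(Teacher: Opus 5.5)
Your proposal is correct and follows essentially the same route as the paper. The shared ingredients are: $m=L+1$ distinct message families; channels that resample the other $m-1$ codewords with the public encoder and output the permutation-invariant $m$-wise canonical LCS, falling back to a prefix otherwise; and the coupling together with a shared decoder tape, which gives $\sum_i p_i\le L+\Pr[\overline G]$ and hence the stated bound via Lemma~\ref{lem:mwise}. The only cosmetic difference is that the paper has the channels output a prefix whenever $k(\lambda)<\lceil\log m\rceil$, whereas you simply leave the small-$\lambda$ case unconstrained; both work since the bound is only claimed for large $\lambda$.
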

\begin{proof}
Set $m:=L+1$ and $S:=\Enc$, and choose rational constants
  \[
  1-\delta<\rho<\gm m,
  \qquad
  1-\rho<\widehat\delta<\delta.
  \]
  For $j\in[m]$, let $m_j(\lambda)$ be the $k(\lambda)$-bit binary
  representation of $j-1$ when $k(\lambda)\ge\lceil\log m\rceil$, and set
  $m_j(\lambda):=0^{k(\lambda)}$ otherwise. These families are uniformly
  PPT-computable and are distinct for all sufficiently large $\lambda$.

  For each $i\in[m]$, define $C_i(1^\lambda,\pk,y)$ as follows. Let $N:=|y|$.
  If $k(\lambda)<\lceil\log m\rceil$, return the prefix of $y$ of length
  $\lceil(1-\widehat\delta)N\rceil$. Otherwise, place $y$ in coordinate $i$ and
  independently sample $x_j\leftarrow S(\pk,m_j)$ for every $j\ne i$. If all
  samples have length $N$, compute
  \[
  z:=Z(x_1,\dots,x_{i-1},y,x_{i+1},\dots,x_m)
  \]
  and its witnessing embedding $J_y$ in $y$. Return $(y_j)_{j\in J_y}$ if
  $|z|\ge(1-\widehat\delta)N$; in all other cases, return the same prefix.
  Thus each $C_i$ is PPT, accesses no key material beyond $\pk$, and always
  outputs a subsequence obtained using at most
  $\widehat\delta N\le\delta N$ deletions. Hence $C_i\in\Cpk$.

Fix a sufficiently large $\lambda$. Sample
  $(\sk,\pk)\leftarrow\KGen(1^\lambda)$ and, conditional on $\pk$,
  independently sample
  \[
  x_j\leftarrow\Enc(\pk,m_j),
  \qquad j\in[m].
  \]
  Couple the invocation of $C_i$ on $x_i$ by using $x_j$ as its internal sample
  in coordinate $j$ for every $j\ne i$. Each coupled invocation then has the
  same distribution as its standalone execution.

  Let
  \[
  z^\ast:=Z(x_1,\dots,x_m),
  \qquad
  G:=\{\LCS(x_1,\dots,x_m)\ge\rho n\},
  \qquad
  \eta:=\Pr[\overline G].
  \]
  On $G$, every channel takes the in-budget branch and, by
  permutation-invariance of $Z$, outputs $z^\ast$. Lemma~\ref{lem:mwise} gives
  \[
  \eta\le1-\beta_{m,\rho}+\negl(\lambda).
  \]
  If $n=\omega(\log\lambda)$, it instead gives
  \[
  \eta\le
  \exp\!\left(-\frac{(\gm m-\rho)^2n}{2m}\right)
  +\negl(\lambda)
  =\negl(\lambda).
  \]

  Share one uniform decoder tape among the $m$ coupled invocations, and let
  \[
  p_i:=\Pr[m_i\in\Dec(\sk,C_i(1^\lambda,\pk,x_i))]
  \]
  be the corresponding standalone success probability. On $G$, the common
  output list contains at most $L=m-1$ of the messages; outside $G$, there are
  at most $m$ successes. Therefore
  \[
  \sum_{i=1}^m p_i
  \le L\Pr[G]+m\Pr[\overline G]
  =L+\eta,
  \]
  and hence
  \[
  \min_i p_i
  \le\frac{L+\eta}{m}
  \le1-\frac{\beta_{m,\rho}}m+\negl(\lambda).
  \]
  Since there are finitely many indices, one fixed channel--message pair
  satisfies this bound for infinitely many $\lambda$, contradicting robustness.
  When $n=\omega(\log\lambda)$, using $\eta=\negl(\lambda)$ gives
  \[
  \min_i p_i
  \le1-\frac1m+\negl(\lambda)
  =1-\frac1{L+1}+\negl(\lambda).
  \]
\end{proof}


\section{Scope of the collision argument}\label{sec:tight}

  The collision argument requires distinct messages, uses public-key-dependent resampling, and weakens as the alphabet grows.

  When $k=0$, the two prospective decoding events concern the same message and
  therefore do not conflict and thus Lemma~\ref{lem:collision} gives no zero-bit
  impossibility. This is consistent with known zero-bit binary PRCs for random
  deletions~\cite{ChristGunn}.

  For comparison, when the channel tapes are independent of the encoder tapes
  the two channel outputs agree with probability at most
  \[
  \exp\!\bigl(-2n(1/2-\delta)^2\bigr)+\negl(\lambda)
  \]
  for $\delta<1/2$; see Appendix~\ref{app:independent-executions}. This does not
  constrain the marginal-preserving coupling used in
  Lemma~\ref{lem:collision}.

 Two boundary cases can be settled directly. Short block lengths are ruled out already by the identity channel, and above deletion fraction one half even key-independent channels suffice.

\begin{proposition}[Logarithmic-block baseline]\label{prop:short-block}
Let $\mathcal C=(\KGen,\Enc,\Dec)$ be a binary, multi-bit public-key PRC with a
single-valued decoder and the repeated-query pseudorandomness of
Definition~\ref{def:undet}. If $n(\lambda)=O(\log\lambda)$, then $\mathcal C$
cannot satisfy the robustness requirement even for the identity channel.
\end{proposition}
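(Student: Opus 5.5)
Taking the identity channel $E(1^\lambda,\pk,y)=y$ (which lies in $\Cpk$ for every $\delta$), we will use pseudorandomness to show that two independent encodings of distinct messages collide with non-negligible probability. On a collision the single-valued decoder can be correct for at most one of the two messages. This is the Collision Lemma with $\rho=1$, except that the hypothesis now comes from exact equality of short uniform strings rather than from LCS concentration.

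\emph{Step 1 (collision transfer).} Fix $m_0=0^k$ and $m_1=0^{k-1}1$. Consider the distinguisher that queries the oracle on $m_0$ and $m_1$, receives $a,b$, and outputs $\mathbf 1[a=b]$. Under the ideal oracle $\mathcal U$ this event has probability exactly $2^{-n}$. Since $n\le c\log\lambda$ for large $\lambda$, this is at least $\lambda^{-c}$, which is non-negligible. Definition~\ref{def:undet} therefore gives $\Pr[x_0=x_1]\ge 2^{-n}-\negl(\lambda)$ for independent $x_i\leftarrow\Enc(\pk,m_i)$.

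\emph{Step 2 (coupling and disjointness).} Apply Lemma~\ref{lem:collision}'s counting argument directly, with $G:=\{x_0=x_1\}$ and a shared decoder tape $r$. Let $p_i:=\Pr[\Dec(\sk,x_i;r)=m_i]$ be the standalone success probabilities; coupling does not alter them. On $G$ both decoders receive the same string, so the events $\{\Dec=m_0\}$ and $\{\Dec=m_1\}$ are disjoint there. Hence
\[
p_0+p_1\le 2-\Pr[G],
\]
and so
\[
\min_i p_i\le 1-\tfrac12 2^{-n}+\negl(\lambda)\le 1-\tfrac12\lambda^{-c}+\negl(\lambda).
\]
For some fixed $i$ this bound holds for infinitely many $\lambda$. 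The failure probability is then at least $\tfrac13\lambda^{-c}$ infinitely often, which is not negligible, so robustness fails.

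\emph{Main obstacle.} Step 1 is where care is needed. We must check that the negligible distinguishing error is genuinely dominated by $2^{-n}\ge\lambda^{-c}$ for all large $\lambda$, and that the equality predicate is efficiently computable, which is trivial here. We also need the "infinitely often" argument to pick a single message, since robustness quantifies over a fixed $m$ with a single $\mu_E$; having only two messages handles this. Finally, the hypothesis $n(\lambda)=O(\log\lambda)$ must be read as $n\le c\log\lambda$ for all sufficiently large $\lambda$.
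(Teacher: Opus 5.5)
Your proof is correct, but it takes a different route from the paper's.

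\textbf{The paper's route.} The paper argues by contrapositive. Robustness against the identity channel, together with single-valuedness, gives $\E_{\pk}[d_{\mathrm{TV}}(P_{0,\pk},P_{1,\pk})]\ge 1-2\varepsilon(\lambda)$. By Markov, the two encoding distributions are then at distance at least $1/2$ for all but a $4\varepsilon$ fraction of keys. Since the output space has size $2^n=\poly(\lambda)$, a distinguisher making $\poly(2^n)$ \emph{repeated} queries on each message can estimate this distance empirically. It gains constant advantage over the ideal oracle, under which both distributions are the same uniform one.

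\textbf{Your route.} You make a single query per message and apply the equality predicate. This transfers the ideal collision probability $2^{-n}\ge\lambda^{-c}$ to real codewords. You then reuse the shared-tape counting of Lemma~\ref{lem:collision} with $G=\{x_0=x_1\}$.

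\textbf{What each buys.} Your argument is more elementary and needs no distribution estimation. It also does not use the repeated-query clause of Definition~\ref{def:undet} that the statement singles out; two distinct queries suffice. And it exhibits the proposition as the degenerate $\rho=1$ case of the collision mechanism. The paper's argument buys quantitative strength. Yours forces a failure probability of order only $2^{-n}$. That violates negligible-error robustness, but it would not exclude a decoder that errs with, say, constant probability $1/10$. The total-variation estimate, by contrast, still yields a constant-advantage distinguisher whenever the decoding error is a constant below $1/4$.
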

\begin{proof}
Suppose otherwise. Let $m_0:=0^{k(\lambda)}$ and
$m_1:=0^{k(\lambda)-1}1$, and let $P_{i,\pk}$ be the conditional distribution
of $\Enc(\pk,m_i)$. Robustness against the identity channel and
single-valuedness imply
\[
\E_{\pk}\!\left[
d_{\mathrm{TV}}(P_{0,\pk},P_{1,\pk})
\right]\ge1-2\varepsilon(\lambda)
\]
for a negligible function $\varepsilon$. Consequently,
\[
\Pr_{\pk}\!\left[
d_{\mathrm{TV}}(P_{0,\pk},P_{1,\pk})<\tfrac12
\right]\le4\varepsilon(\lambda).
\]

Since $n=O(\log\lambda)$, the output space has size
$M:=2^n=\poly(\lambda)$. A PPT distinguisher can make $\poly(M)$ repeated
queries on each message and estimate the total variation distance between
the two empirical encoding distributions to constant accuracy. In the real
experiment this distance is at least $1/2$ except with negligible
probability, whereas in the ideal experiment both messages produce the same
uniform distribution. This contradicts Definition~\ref{def:undet}.
\end{proof}

  \begin{proposition}[Key-independent threshold at one half]
  \label{prop:majority}
  For every constant $\delta\in[1/2,1]$, no binary, multi-bit, undetectable
  public-key PRC with a single-valued decoder is robust against $\Cpk$, even
  when restricted to key-independent channels.
  \end{proposition}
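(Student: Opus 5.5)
The plan is to mimic the Collision Lemma, but with a key-independent canonical target in place of a resampled partner codeword. The natural target is a monochromatic subsequence. Any binary $y\in\bits^n$ contains at least $\lceil n/2\rceil$ copies of its majority symbol $b(y)$, with ties broken towards $0$. So deleting every minority symbol leaves the output $b(y)^{\lceil\cdot\rceil}$, of length at least $n/2\ge(1-\delta)n$ whenever $\delta\ge1/2$.

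To make the channel's output depend only on the majority bit, I would truncate to the exact length $\ell:=\lceil n/2\rceil$. The channel $E(1^\lambda,\pk,y)$ ignores $\pk$. It computes $N=|y|$ and the majority symbol $b$, and returns $b^{\lceil N/2\rceil}$, which is a subsequence of $y$ obtained by keeping the first $\lceil N/2\rceil$ occurrences of $b$. This satisfies both clauses of Definition~\ref{def:class} for every $\delta\ge1/2$. It is deterministic, PPT, and key-independent. Crucially, the same single channel $E$ will be used against both messages.

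Take $m_0:=0^k$ and $m_1:=0^{k-1}1$ and let $x_i\leftarrow\Enc(\pk,m_i)$. The output $E(x_i)$ is always one of the two strings $0^\ell$ or $1^\ell$. Write $p_i:=\Pr[\Dec(\sk,E(x_i))=m_i]$. Fix $\sk$ and the decoder tape $r$. Because $\Dec$ is single-valued, each of the two strings decodes to at most one message. Let $A$ be the set of messages hit by $\Dec(\sk,0^\ell;r)$ and $\Dec(\sk,1^\ell;r)$, so $|A|\le2$.

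The direct disjointness argument from the Collision Lemma fails here, since the two outputs are not forced to coincide. The obstacle is that the decoder could send $0^\ell\mapsto m_0$ and $1^\ell\mapsto m_1$, and the encoder could bias majorities by message. That bias is exactly what pseudorandomness forbids. Under the ideal oracle, the majority bits of two independent uniform strings are i.i.d. Consider the distinguisher that queries $m_0$ and $m_1$ and outputs the pair of majority bits $(b(x_0),b(x_1))$; it shows that the joint law of these bits is negligibly close to that of two independent copies of the uniform majority bit, whose distribution $(\pi,1-\pi)$ has $\pi\ge1/2$.

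In the real experiment, then, the law of $b(x_0)$ is negligibly close to that of $b(x_1)$ (I note it is already common given $\pk$ on average, and use the one-query version per message). Hence $p_0=\E[\sum_b\Pr[b(x_0)=b\mid\pk]\,\mathbf1[\Dec(\sk,b^\ell)=m_0]]$, and similarly for $p_1$. A cleaner route is to bound $p_0+p_1$ by replacing $b(x_1)$'s law with $b(x_0)$'s via an efficient hybrid.

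The catch is that the indicator involves $\sk$, so no PPT distinguisher can evaluate it. I expect this to be the main obstacle. To avoid it, I would instead use the distinguisher that outputs $\mathbf1[b(x_0)=b(x_1)]$. In the ideal world this equals $1$ with probability $\pi^2+(1-\pi)^2\ge1/2$, so in the real world $\Pr[E(x_0)=E(x_1)]\ge1/2-\negl$.

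Now sample $x_0,x_1$ independently given $\pk$ with a shared decoder tape, and let $G:=\{E(x_0)=E(x_1)\}$. This is exactly the Collision Lemma's computation with $\nu=1/2+\negl$ and $G$ playing the role of the LCS event: the success events are disjoint on $G$ by single-valuedness. That gives $p_0+p_1\le1+\Pr[\overline G]\le3/2+\negl$. Hence $\min_ip_i\le3/4+\negl$, so one fixed message fails for infinitely many $\lambda$, contradicting robustness. No lower bound on $n$ is needed, because the event "majorities agree" has ideal probability at least $1/2$ at every block length.
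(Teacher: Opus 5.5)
Your proposal is correct and matches the paper's proof. The paper likewise uses the single key-independent majority channel $\sigma(y)^{\lceil N/2\rceil}$ against both canonical messages, the two-query distinguisher showing that the majorities agree with probability at least $1/2-\negl(\lambda)$, and the shared-decoder-tape disjointness count giving $p_0+p_1\le 3/2+\negl(\lambda)$, hence $\min_i p_i\le 3/4+\negl(\lambda)$; your exploratory detour through $\sk$-dependent hybrids can simply be dropped.
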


  \begin{proof}
  For $y\in\bits^N$, let $\sigma(y)$ be its majority symbol, resolving ties in favour of
  $0$, and define
  \[
  C_{\mathrm{maj}}(1^\lambda,\pk,y)
  :=
  \sigma(y)^{\lceil N/2\rceil}.
  \]
  This deterministic channel ignores $\pk$ and deletes at most
  $\lfloor N/2\rfloor\le\delta N$ symbols, so
  $C_{\mathrm{maj}}\in\Cpk$.

  Consider the canonical messages $m_0=0^k$ and $m_1=0^{k-1}1$, and
  independently sample $x_i\leftarrow\Enc(\pk,m_i)$. For independent uniform
  $U,U'\in\bits^n$, writing $s:=\Pr[\sigma(U)=0]$ gives
  \[
  \Pr[\sigma(U)=\sigma(U')]
  =
  s^2+(1-s)^2
  \ge\frac12.
  \]
  A two-query application of pseudorandomness therefore yields
  \[
  \eta
  :=
  \Pr[\sigma(x_0)\ne\sigma(x_1)]
  \le\frac12+\negl(\lambda).
  \]

  Share one uniform decoder tape between the two prospective decoding events.
  When $\sigma(x_0)=\sigma(x_1)$, both channel outputs coincide, so at most one
  message is decoded correctly; otherwise there are at most two successes.
  Hence, for the standalone success probabilities $p_0,p_1$,
  \[
  p_0+p_1\le1+\eta\le\frac32+\negl(\lambda),
  \]
  and therefore
  \[
  \min\{p_0,p_1\}\le\frac34+\negl(\lambda).
  \]
  This contradicts robustness.
  \end{proof}

  For each fixed alphabet size $q$, Corollary~\ref{cor:qary} gives the ceiling
  \[
  \delta_c(q):=1-\gamma_q^{\mathrm{LCS}}.
  \]
  Since Kiwi, Loebl, and Matou\v{s}ek~\cite{KLM} prove
  \[
  \gamma_q^{\mathrm{LCS}}\sim\frac{2}{\sqrt q},
  \qquad
  \delta_c(q)=1-\frac{2+o(1)}{\sqrt q}\longrightarrow1,
  \]
  the pairwise-LCS method becomes weaker as $q$ grows. The corollary assumes
  that $q$ is fixed independently of $\lambda$ and does not cover alphabets
  growing with the security parameter, such as those of Golowich and
  Moitra~\cite{GolowichMoitra}.
\section{Conclusion and open problems}\label{sec:conclusion}
We prove that, for every fixed alphabet size $q\ge2$, no multi-bit,
  undetectable public-key PRC with a single-valued decoder is robust against
  public-key-dependent deletions for any constant
  \[
  \delta>1-\gamma_q^{\mathrm{LCS}}.
  \]
  For $q=2$ current bounds rule out every $\delta>0.207334008$ and for every fixed list size $L$, provided the message space contains at least $L+1$ elements for all sufficiently large $\lambda$, the same collision argument gives the threshold
  $1-\gm{L+1}$, which approaches $1/2$ as
  $1/2-\Theta(1/\sqrt L)$. The proof uses pseudorandomness to transfer
  random-string LCS behaviour to codewords and public samplability to force
  distinct messages onto a common subsequence.

Open problems include determining the deletion threshold for key-independent
  channels in the unresolved interval $(1-\gamma_2,1/2)$, sharpening finite-$m$
  bounds on $\gm m$, and establishing deletion bounds under other natural channel
  models.
\section*{AI Disclosure}
We employed ChatGPT/Codex 5.5 (OpenAI) to assist with wording and proof checking. It was also used in the ideation and
discussion of the list-decoding extension (section~\ref{sec:stair}). We take full responsibility for the correctness, originality, and accuracy of the manuscript.

\bibliographystyle{alpha}
{\footnotesize
\bibliography{main}
}

\appendix
\section{Independent-execution comparison}
\label{app:independent-executions}

\begin{lemma}[Committed-string embedding]\label{lem:embed}
Fix any string $z\in\bits^{\ell}$ and let $x\in\bits^{n}$ be uniform. Then
\[
\Pr_x[z\subseq x]\ =\ \Pr\big[\Bin(n,\tfrac12)\ge \ell\big],
\]
a quantity depending only on $\ell$, not on the bits of $z$. In particular, for
$\ell>n/2$,
\[
\Pr_x[z\subseq x]\ \le\ \exp\!\Big(-2n\big(\tfrac{\ell}{n}-\tfrac12\big)^2\Big).
\]
\end{lemma}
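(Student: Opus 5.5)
We prove the two claims of Lemma~\ref{lem:embed} in turn: first an exact formula for $\Pr_x[z\subseq x]$ via greedy embedding, then a Hoeffding tail bound.

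\textbf{Greedy characterisation.} For fixed $z\in\bits^\ell$ and any $x\in\bits^n$, $z\subseq x$ holds iff the greedy leftmost embedding succeeds. This embedding scans $x$ left to right, keeping a pointer $t$ (initially $1$) into $z$. When it sees $x_j=z_t$ it advances $t$, and it stops once $t=\ell+1$. The equivalence is the standard exchange argument: any embedding can be shifted leftward to the greedy one without breaking it.

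\textbf{Coupling to coin flips.} Run the greedy scan on uniform $x$ and define indicators $B_j:=\mathbf 1[x_j=z_{t_j}]$, where $t_j$ is the pointer value just before position $j$. After the pointer reaches $\ell+1$, set $B_j:=x_j$, so the process stays defined.

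Conditioned on $x_1,\dots,x_{j-1}$, the target symbol $z_{t_j}$ (or the convention after completion) is determined, and $x_j$ is a fresh uniform bit. So $B_j$ is uniform and independent of the past, which means $B_1,\dots,B_n$ are i.i.d.\ $\mathrm{Bernoulli}(1/2)$. A simpler view: the map $x\mapsto B$ is a bijection of $\bits^n$, since $x_j=z_{t_j}\oplus 1\oplus B_j$ with $t_j$ determined by earlier bits, and a bijection preserves the uniform distribution.

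The embedding succeeds exactly when $\sum_{j=1}^n B_j\ge\ell$. Before completion, each success advances the pointer by one. Positions after completion do not affect the event, and the success count can only grow past $\ell$ once completion has occurred. More carefully: completion happens iff at least $\ell$ of the pre-completion indicators equal $1$, which holds iff $\sum_j B_j\ge \ell$. The forward direction is immediate. For the reverse, if completion never occurs then every $B_j$ counts a pre-completion success, so $\sum_j B_j\le\ell-1$.

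Since $\sum_j B_j\sim\Bin(n,\tfrac12)$, this gives $\Pr_x[z\subseq x]=\Pr[\Bin(n,\tfrac12)\ge\ell]$, which is independent of the bits of $z$.

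\textbf{Tail bound.} For $\ell>n/2$, Hoeffding's inequality gives
\[
\Pr\big[\Bin(n,\tfrac12)-\tfrac n2\ge \ell-\tfrac n2\big]\le\exp\!\big(-2(\ell-n/2)^2/n\big)=\exp\!\big(-2n(\ell/n-1/2)^2\big).
\]

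The only delicate point is the bookkeeping in the coupling: indicators after completion must still be uniform, so that the bijection claim holds on all of $\bits^n$. Defining them as $B_j:=x_j$ handles this cleanly.
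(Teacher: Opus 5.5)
Your proof is correct and follows the paper's argument. Both use greedy leftmost matching, the observation that each scanned bit hits the current target with probability $\tfrac12$ independently of the target, and Hoeffding for the tail. The paper phrases the middle step as a sum of $\ell$ i.i.d.\ geometric waiting times plus negative-binomial/binomial duality. Your explicit bijection $x\mapsto B$ (with $B_j:=x_j$ after completion) is a more careful rendering of the same idea that handles the finite-length truncation directly.
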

\begin{proof}
Greedy left-to-right matching is optimal for testing $z\subseq x$: scan $x$ and
match the next unmatched symbol of $z$ whenever it appears. Each scanned bit of
$x$ equals the current target symbol of $z$ with probability $\tfrac12$,
independent of the target's value (as $x$ is uniform), so the number of $x$-bits
consumed to match all $\ell$ symbols is a sum of $\ell$ i.i.d.\ $\Geo(\tfrac12)$
variables. Thus $z\subseq x$ iff this sum is $\le n$, i.e.\ iff among $n$ i.i.d.\
fair coin flips at least $\ell$ are ``successes'' (negative-binomial/binomial
duality): $\Pr[z\subseq x]=\Pr[\Bin(n,\tfrac12)\ge \ell]$, independent of the bits
of $z$. Hoeffding's inequality gives the stated tail for $\ell>n/2$.
\end{proof}

Call a channel $E\in\Cpk$ \emph{key-independent} if
$E(1^\lambda,\pk,y)$ ignores the $\pk$ argument (and hence its output
distribution on $y$ is identical for every public key). It may depend on the
public parameter $1^\lambda$. Thus key-independent channels retain
the deletion-only, sure-budget, and PPT requirements of
Definition~\ref{def:class} on every well-formed input.

\begin{theorem}[Independent-execution collision bound]\label{thm:barrier}
Let $\mathcal C$ be an undetectable binary public-key PRC, let $m_0,m_1$ be
uniformly computable message families, and set
$x_i\leftarrow\Enc(\pk,m_i)$ independently under
$(\sk,\pk)\leftarrow\KGen(1^\lambda)$. Fix any constant
$\delta\in[0,1/2)$. 
For every pair $(C_0,C_1)$ of members of $\Cpk$, run the channels with random tapes. Conditional on $\pk$, each tape has its usual marginal distribution, and the pair of tapes is independent of the encoder tapes producing $x_0$ and $x_1$. The two channel tapes may otherwise be arbitrarily coupled.
\[
\Pr[C_0(1^\lambda,\pk,x_0)=C_1(1^\lambda,\pk,x_1)]
\ \le\ \exp\!\big(-2n(\tfrac12-\delta)^2\big)+\negl(\lambda).
\]
When $n=\omega(\log\lambda)$ this bound is negligible: in that regime, such
executions whose channel tapes are independent of the encoder tapes produce equal outputs only with negligible
probability for every constant $\delta\in[0,\tfrac12)$, including throughout
$(0.207334008,1/2)$.
\end{theorem}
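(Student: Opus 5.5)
We would prove Theorem~\ref{thm:barrier} by a reduction to pseudorandomness, followed by an information-theoretic bound in the ideal world built on Lemma~\ref{lem:embed}.

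\emph{Step 1: pass to the ideal world.} Consider the PPT distinguisher $A$ that, given $(1^\lambda,\pk)$ and an oracle, proceeds as follows. It queries $m_0$ and $m_1$ to obtain $a,b$. It samples the pair of channel tapes $(t_0,t_1)$ from the prescribed coupling, conditional on $\pk$; this coupling is fixed in advance and does not involve the encoder tapes. It then runs $C_0(1^\lambda,\pk,a;t_0)$ and $C_1(1^\lambda,\pk,b;t_1)$ and outputs $1$ iff the two outputs are equal. With the real oracle, the output bit has exactly the distribution of the collision event in the statement, because the channel tapes are independent of the encoder tapes. This step requires the coupling of tapes to be efficiently samplable given $\pk$. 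We would state this explicitly as part of the hypothesis, reading ``run the channels with random tapes'' as a PPT procedure; the case of independent tapes is automatically efficient. By Definition~\ref{def:undet}, the collision probability is within $\mu_A(\lambda)$ of the probability that $C_0(1^\lambda,\pk,U;t_0)=C_1(1^\lambda,\pk,U';t_1)$ for independent uniform $U,U'\in\bits^n$ that are independent of $(\pk,t_0,t_1)$.

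\emph{Step 2: bound the ideal collision.} In the ideal world, condition on $\pk$, on the tapes $(t_0,t_1)$, and on $U$. The string $w:=C_0(1^\lambda,\pk,U;t_0)$ is then fixed, and the budget guarantees $|w|\ge(1-\delta)n>n/2$. For a collision we need $C_1(1^\lambda,\pk,U';t_1)=w$, and since $C_1$ is deletion-only this forces $w\subseq U'$. Moreover, $U'$ is uniform and independent of everything we have conditioned on. Lemma~\ref{lem:embed} with $\ell=|w|\ge(1-\delta)n$ therefore gives
\[
\Pr[w\subseq U']\le\exp\!\big(-2n(\ell/n-\tfrac12)^2\big)\le\exp\!\big(-2n(\tfrac12-\delta)^2\big).
\]
The second inequality holds because $\ell/n-\tfrac12\ge\tfrac12-\delta>0$. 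Averaging over the conditioning and adding $\mu_A$ yields the displayed bound. If $n=\omega(\log\lambda)$, then $(\tfrac12-\delta)^2$ is a positive constant and the exponential term is $\lambda^{-\omega(1)}$, hence negligible, for every constant $\delta<1/2$.

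\emph{Main obstacle.} The delicate point is Step 1: making sure the collision event is efficiently testable, and that the coupling of the two channel tapes can be simulated by the distinguisher without access to the encoder tapes. This is exactly where the independence hypothesis enters. The marginal-preserving coupling of Lemma~\ref{lem:collision} fails this hypothesis, since its channel randomness is correlated with the other codeword. Step 2 is routine once we condition in the right order: fix $U$ and the tapes first, so that $C_0$'s output becomes a committed string, and only then use the fresh randomness of $U'$.
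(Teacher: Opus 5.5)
Your Step~2 is correct, but Step~1 does not prove the theorem as stated. The statement allows the two channel tapes to be \emph{arbitrarily} coupled. It requires only that, conditional on $\pk$, each tape has its usual marginal and the pair is independent of the encoder tapes. Your distinguisher must sample the joint pair $(t_0,t_1)$, so you are forced to add the hypothesis that the coupling is PPT-samplable from $\pk$. That hypothesis is not in the theorem, and it excludes couplings that are inefficient or not samplable from $\pk$ alone. As written you therefore prove a strictly weaker result. Your remark that this is ``exactly where the independence hypothesis enters'' conflates two separate requirements: independence from the encoder tapes, and simulability of the coupling.

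The missing idea is to do your containment step \emph{before} invoking pseudorandomness, in the real experiment, rather than only in the ideal world. A collision forces $y_0:=C_0(1^\lambda,\pk,x_0)=C_1(1^\lambda,\pk,x_1)\subseq x_1$. Hence the collision event lies inside $B:=\{y_0\subseq x_1\}$, and $B$ depends on the channel randomness only through $C_0$'s own tape.

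Conditional on $\pk$, that tape has its usual marginal and is independent of the encoder tapes. So $\Pr[B]$ is the same for every admissible coupling. It can be tested by a two-query distinguisher that samples only $C_0$'s coins, which is efficient since $C_0$ is PPT. The distinguisher computes $y_0$ from the answer $a$ to $m_0$ and outputs $\mathbf 1[y_0\subseq b]$ on the answer $b$ to $m_1$. In the ideal world, the Lemma~\ref{lem:embed} bound from your Step~2 then applies pointwise in $y_0$.

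This is the paper's proof. In it, the independence hypothesis is used only to ensure that $(\pk,x_0,x_1,y_0)$ has the law the distinguisher simulates. The coupling between the two tapes never needs to be simulated at all.
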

\begin{proof}
Let $y_0:=C_0(1^\lambda,\pk,x_0)$. On the collision event,
$y_0=C_1(1^\lambda,\pk,x_1)\subseq x_1$ and
$|y_0|\ge(1-\delta)n$; hence the collision event is contained in
$B:=\{y_0\subseq x_1\}$.

Among the channel tapes, the event $B$ depends only on the marginal execution of $C_0$. Hence the argument applies to any coupling of the two channel tapes that preserves their usual marginals and is independent of the encoder randomness.

Now use a two-query distinguisher given $(1^\lambda,\pk)$. It samples $C_0$'s random tape with its usual distribution, independently of its oracle answers, computes
$m_0,m_1$ from $1^\lambda$, queries $m_0$, receives $a$, computes
$y_0:=C_0(1^\lambda,\pk,a)$, queries $m_1$, receives $b$, and outputs
$\mathbf1[y_0\subseq b]$. Under the ideal oracle, $b$ is uniform and independent
of the jointly generated tuple $(\pk,a,y_0)$. Pointwise for every realised $y_0$,
Lemma~\ref{lem:embed} and $|y_0|\ge(1-\delta)n$ give
\[
\Pr[y_0\subseq b\mid y_0]
 \le\exp\!\big(-2n(\tfrac12-\delta)^2\big).
\]
Averaging proves the same ideal-world bound. Pseudorandomness transfers the
event $B$ to the real two-query experiment up to $\negl(\lambda)$, and the
containment of the collision event in $B$ proves the claim.
\end{proof}

\end{document}